\newcommand{\tuple}[1]{\langle #1 \rangle}
\newcommand{\set}[1]{\{ #1 \}}
\newcommand{\ldot}{\mathpunct{.}}
\newcommand{\pow}[1]{2^{#1}}
\newcommand{\card}[1]{|{#1}|}
\newcommand{\twoexptime}{\textsc{2ExpTime}}
\newcommand{\bool}{\mathbb{B}}
\newcommand{\nat}{\mathbb{N}}
\newcommand{\Tr}{\mathit{Tr}}
\newcommand{\ltl}{\text{LTL}}
\newcommand{\hyperltl}{\text{HyperLTL}}
\newcommand{\mc}{\textsc{MCHyper}}
\newcommand{\abc}{\textsc{abc}}
\newcommand{\aiger}{\textsc{Aiger}}
\newcommand{\zthree}{\textsc{z3}}
\newcommand{\tsys}{\mathcal{S}}
\newcommand{\T}{S}
\newcommand{\st}{s}
\newcommand{\ucw}{\mathcal{A}}
\newcommand{\aut}{\mathcal{A}}
\newcommand{\acc}{\alpha}
\newcommand{\pathassign}{\Pi}
\newcommand{\pathvars}{\mathcal{V}}
\renewcommand{\models}{\vDash}
\newcommand{\nmodels}{\nvDash}
\newcommand{\ap}{\text{AP}}
\newcommand{\lang}{\mathcal{L}}
\newcommand{\zip}{\mathit{zip}}
\newcommand{\unzip}{\mathit{unzip}}
\newcommand{\U}{\Until}
\newcommand{\X}{\Next}
\newcommand{\G}{\Globally}
\newcommand{\F}{\Eventually}
\newcommand{\R}{\Release}
\newcommand{\W}{\WUntil}
\newcommand{\true}{\mathit{true}}
\newcommand{\false}{\mathit{false}}
\title{Verifying Hyperliveness\thanks{This work was supported in part by the Collaborative Research Center ``Foundations of Perspicuous Software Systems'' (TRR 248, 389792660), by the European Research Council (ERC) Grant OSARES (No. 683300), by Madrid Reg. Government project ``S2018/TCS-4339 (BLOQUES-CM)'', by EU H2020 project 731535 ``Elastest'' and by Spanish National Project ``BOSCO (PGC2018-102210-B-100)''.}}
\author{Norine Coenen\inst{1} \and Bernd Finkbeiner\inst{1} \and \\ C\'esar S\'anchez\inst{2} \and Leander Tentrup\inst{1}}
\authorrunning{N. Coenen, B. Finkbeiner, C. S\'anchez, L. Tentrup}
\institute{Reactive Systems Group, Saarland University \\ \texttt{lastname@react.uni-saarland.de} \and IMDEA Software Institute \\ \texttt{cesar.sanchez@imdea.org}}
\begin{document}

\maketitle

\begin{abstract}
HyperLTL is an extension of linear-time temporal logic for the specification of hyperproperties, i.e., temporal properties that relate multiple computation traces.
HyperLTL can express information flow policies as well as properties like symmetry in mutual exclusion algorithms or Hamming distances in error-resistant transmission protocols.
Previous work on HyperLTL model checking has focussed on the alternation-free fragment of HyperLTL, where verification reduces to checking a standard trace property over an appropriate self-composition of the system. 
The alternation-free fragment does, however, not cover general hyperliveness properties. 
Universal formulas, for example, cannot express the secrecy requirement that for every possible value of a secret variable there exists a computation where the value is different while the observations made by the external observer are the same.
In this paper, we study the more difficult case of hyperliveness properties expressed as  HyperLTL formulas with quantifier alternation.
We reduce existential quantification to strategic choice and show that synthesis algorithms can be used to eliminate the existential quantifiers automatically.
We furthermore show that this approach can be extended to reactive system synthesis, i.e., to automatically construct a reactive system that is guaranteed to satisfy a given HyperLTL formula.
\end{abstract}

%===============================================================================
\section{Introduction}
\label{sec:introduction}
%===============================================================================

HyperLTL~\cite{conf/post/ClarksonFKMRS14} is a temporal logic for \emph{hyperproperties}~\cite{journals/jcs/ClarksonS10}, i.e., for properties that relate multiple computation traces. 
Hyperproperties cannot be expressed in standard linear-time temporal logic (LTL), because LTL  can only express \emph{trace properties}, i.e., properties that characterize the correctness of individual computations. 
Even branching-time temporal logics like CTL and CTL$^*$, which quantify over computation paths, cannot express hyperproperties, because quantifying over a second path automatically means that the subformula can no longer refer to the previously quantified path.
HyperLTL addresses this limitation with quantifiers over trace variables, which allow the subformula to refer to all previously chosen traces. 
For example, \emph{noninterference}~\cite{conf/sp/GoguenM82a} between a secret input $h$ and a public output $o$ can be specified in HyperLTL by requiring that all pairs of traces $\pi$ and $\pi'$ that always have the same inputs except for $h$ (i.e., all inputs in $I\setminus\{h\}$ are equal on $\pi$ and $\pi'$) also have the same output $o$ at all times:
\[
\forall\pi.\forall\pi'.~ \G \big(\!\!\!\bigwedge_{i\in I\setminus \{h\}}\! i_\pi = i_{\pi'}\big) ~\Rightarrow~ \G\, (o_\pi = o_{\pi'})
\]
This formula states that a change in the secret input $h$ alone cannot cause any difference in the output $o$. 

For certain properties of interest, the additional expressiveness of HyperLTL comes at no extra cost when considering the model checking problem. 
To check a property like noninterference, which only has universal trace quantifiers, one simply builds the self-composition of the system, which provides a separate copy of the state variables for each trace. 
Instead of quantifying over all pairs of traces, it then suffices to quantify over individual traces of the self-composed system, which can be done with standard LTL. 
Model checking universal formulas is NLOGSPACE-complete in the size of the system and PSPACE-complete in the the size of the formula, which is precisely the same complexity as for LTL.

Universal HyperLTL formulas suffice to express hypersafety properties like noninterference, but not hyperliveness properties that require, in general, quantifier alternation. 
A prominent example is \emph{generalized noninterference} (GNI)~\cite{conf/sp/McCullough88}, which can be expressed as the following HyperLTL formula:
\[
\forall\pi.\forall\pi'.\exists\pi''.~\G\, (h_\pi = h_{\pi''}) ~\wedge~ \G\, (o_{\pi'} = o_{\pi''})
\]
This formula requires that for every pair of traces $\pi$ and $\pi'$, there is a third trace $\pi''$ in the system that agrees with $\pi$ on $h$ and with $\pi'$ on $o$. 
The existence of an appropriate trace $\pi''$ ensures that in $\pi$ and $\pi'$, the value of $o$ is not determined by the value of $h$.
Generalized noninterference stipulates that low-security outputs may not be altered by the injection of high-security inputs, while permitting nondeterminism in the low-observable behavior.
The existential quantifier is needed to allow this nondeterminism.
GNI is a hyperliveness property~\cite{journals/jcs/ClarksonS10} even though the underlying LTL formula is a safety property. 
The reason for that is that we can extend any set of traces that violates GNI into a set of traces that satisfies GNI, by adding, for each offending pair of traces $\pi, \pi'$, an appropriate trace $\pi''$.

Hyperliveness properties also play an important role in applications beyond security.
For example, \emph{robust cleanness}~\cite{conf/esop/DArgenioBBFH17} specifies that significant differences in the output behavior are only permitted after significant differences in the input:
\[
\forall {\pi}.\forall {\pi'}.\exists{\pi''}.\
\G \big(i_{\pi'} = i_{\pi''}\big) 
~\wedge~
\big(\hat{d}(o_{\pi},o_{\pi''})\leq \kappa_o~\mathcal W~ \hat{d}(i_{\pi},i_{\pi''})>\kappa_i \big)
\]
The differences are measured by a distance function $\hat{d}$ and compared to constant thresholds $\kappa_i$ for the input and $\kappa_o$ for the output.
The formula specifies the existence of a trace $\pi''$ that globally agrees with $\pi'$ on the input and where the difference in the output $o$ between $\pi$ and $\pi''$ is bounded by $\kappa_o$, unless the difference in the input $i$ between $\pi$ and $\pi''$ was greater than $\kappa_i$.  
Robust cleanness, thus, forbids unexpected jumps in the system behavior that are, for example, due to software doping, while allowing for behavioral differences due to nondeterminism.

With quantifier alternation, the model checking problem becomes much more difficult. 
Model checking HyperLTL formulas of the form $\forall^* \exists^* \varphi$, where $\varphi$ is a quantifier-free formula, is PSPACE-complete in the size of the system and EXPSPACE-complete in the formula. 
The only known model checking algorithm replaces the existential quantifier with the negation of a universal quantifier over the negated subformula; but this requires a complementation of the system behavior, which is completely impractical for realistic systems.

In this paper, we present an alternative approach to the verification of hyperliveness properties.
We view the model checking problem of a formula of the form $\forall \pi. \exists \pi'.\ \varphi$ as a game between the $\forall$-player and the $\exists$-player. 
While the $\forall$-player moves through the state space of the system building trace $\pi$, the $\exists$-player must match each move in a separate traversal of the state space resulting in a trace $\pi'$ such that the pair $\pi, \pi'$ satisfies $\varphi$. 
Clearly, the existence of a winning strategy for the $\exists$-player implies that $\forall \pi. \exists \pi'.\ \varphi$ is satisfied.
The converse is not necessarily true: 
Even if there always is a trace $\pi'$ that matches the universally chosen trace $\pi$, the $\exists$-player may not be able to construct this trace, because she only knows about the choices made by the $\forall$-player in the finite prefix of $\pi$ that has occurred so far, and not the choices that will be made by the $\forall$-player in the infinite future.
We address this problem by introducing \emph{prophecy variables} into the system. 
Without changing the behavior of the system, the prophecy variables give the $\exists$-player the information about the future that is needed to make the right choice after seeing only the finite prefix. 
Such prophecy variables can be provided manually by the user of the model checker to provide a lookahead on future moves of the $\forall$-player.

This game-theoretic approach provides an opportunity for the user to reduce the complexity of the model checking problem: 
If the user provides a strategy for the $\exists$-player, then the problem reduces to the cheaper model checking problem for universal properties.
We show that such strategies can also be constructed automatically using synthesis.
Beyond model checking, the game-theoretic approach also provides a method for the synthesis of systems that satisfy a conjunction of hypersafety and hyperliveness properties. 
Here, we do not only synthesize the strategy, but also construct the system itself, i.e., the game graph on which the model checking game is played. 
While the synthesis from $\forall^* \exists^*$ hyperproperties is known to be undecidable in general, we show that the game-theoretic approach can naturally be integrated into bounded synthesis, which checks for the existence of a correct system up to a bound on the number of states.

\paragraph{\bf Related Work. }
While the verification of general HyperLTL formulas has been studied before~\cite{conf/post/ClarksonFKMRS14,conf/cav/FinkbeinerRS15,conf/cav/FinkbeinerHT18}, there has been, so far, no practical model checking algorithm for HyperLTL formulas with quantifier alternation. 
The existing algorithm involves a complementation of the system automaton, which results in an exponential blow-up of the state space~\cite{conf/cav/FinkbeinerRS15}. 
The only existing model checker for HyperLTL, $\mc$~\cite{conf/cav/FinkbeinerRS15}, was therefore, so far, limited to the alternation-free fragment. 
Although some hyperliveness properties lie in this fragment, quantifier alternation is needed to express general hyperliveness properties like GNI. 
In this paper, we present a technique to model check these hyperliveness properties and extend $\mc$ to formulas with quantifier alternation.

The situation is similar in the area of reactive synthesis. 
There is a synthesis algorithm that automatically constructs implementations from HyperLTL specifications~\cite{conf/cav/FinkbeinerHLST18} using the bounded synthesis approach~\cite{journals/sttt/FinkbeinerS13}. 
This algorithm is, however, also only applicable to the alternation-free fragment of HyperLTL. 
In this paper, we extend the bounded synthesis approach to HyperLTL formulas with quantifier alternation.
Beyond the model checking and synthesis problems, the satisfiability~\cite{conf/concur/FinkbeinerH16,conf/cav/FinkbeinerHS17,conf/atva/FinkbeinerHH18} and monitoring~\cite{conf/rv/FinkbeinerHST17,conf/tacas/FinkbeinerHST18,conf/tacas/HahnST19} problems of HyperLTL have also been studied in the past. 

For certain information-flow security policies, there are verification techniques that use methods related to our model checking and synthesis algorithms. 
Specifically, the self-composition technique~\cite{conf/lfcs/BartheCK13,conf/csfw/BartheDR04}, a construction based on the product of copies of a system, has been tailored for various trace-based security definitions~\cite{journals/jcs/DSouzaHRS11,conf/csfw/HuismanWS06,journals/entcs/MeydenZ07}. 
Unlike our algorithms, these techniques focus on specific information-flow policies, not on a general logic like HyperLTL.

The use of prophecy variables~\cite{journals/tcs/AbadiL91} to make information about the future accessible is a known technique in the verification of trace properties. 
It is, for example, used to establish simulation relations between automata~\cite{journals/iandc/LynchV95} or in the verification of CTL$^*$ properties~\cite{conf/cav/CookKP15}.  

In our game-theoretic view on the model checking problem for $\forall^* \exists^*$ hyperproperties the $\exists$-player has an infinite lookahead.
There is some work on \emph{finite} lookahead on trace languages~\cite{conf/icalp/Klein015}. 
We use the idea of finite lookahead as an approximation to construct existential strategies and give a novel synthesis construction for strategies with delay based on bounded synthesis~\cite{journals/sttt/FinkbeinerS13}.  

%===============================================================================
\section{Preliminaries}
\label{sec:preliminaries}
%===============================================================================

For tuples $\vec{x} \in X^n$ and $\vec{y} \in X^m$ over set $X$, we use $\vec{x} \cdot \vec{y} \in X^{n+m}$ to denote the concatenation of $\vec{x}$ and $\vec{y}$.
Given a function $f\colon X \to Y$ and a tuple $\vec{x} \in X^n$, we define by $f \circ \vec{x} \in Y^n$ the tuple $(f(\vec{x}[1]), \dots, f(\vec{x}[n]))$. 
Let $\ap$ be a finite set of atomic propositions and let $\Sigma = 2^\ap$ be the corresponding alphabet.
A \emph{trace} $t \in \Sigma^\omega$ is an infinite sequence of elements of $\Sigma$.
We denote a set of traces by $\Tr \subseteq \Sigma^\omega$.
We define $t[i,\infty]$ to be the suffix of $t$ starting at position $i \geq 0$.

%-------------------------------------------------------------------------------
\paragraph{HyperLTL.}
%-------------------------------------------------------------------------------

$\hyperltl$~\cite{conf/post/ClarksonFKMRS14} is a temporal logic for specifying hyperproperties.
It extends $\ltl$ by quantification over trace variables $\pi$ and a method to link atomic propositions to specific traces.
Let $\pathvars$ be an infinite set of trace variables.
Formulas in $\hyperltl$ are given by the grammar
\begin{align*}
\varphi &{}\Coloneqq \forall\pi\ldot\varphi \mid \exists\pi\ldot\varphi \mid \psi \enspace, \text{ and}\\
\psi &{}\Coloneqq a_\pi \mid \neg\psi \mid \psi\lor\psi \mid \X\psi \mid \psi\U\psi \enspace,
\end{align*}
where $a \in \ap$ and $\pi \in \pathvars$.
We allow the standard boolean connectives $\wedge$, $\rightarrow$, $\leftrightarrow$ as well as the derived $\ltl$ operators release $\varphi \R \psi \equiv \neg(\neg\varphi \U \neg\psi)$, eventually $\F \varphi \equiv \true\ \U\ \varphi$, globally $\G \varphi \equiv \neg \F \neg \varphi$, and weak until $\varphi \W \psi \equiv \G \varphi \lor (\varphi\ \U\ \psi)$.

We call a $\mathcal{Q}^+ \mathcal{Q}'^+ \varphi\, \hyperltl$ formula (for $\mathcal{Q},\mathcal{Q}' \in \{\forall, \exists\}$ and quantifier-free formula $\varphi$) \emph{alternation-free} iff $\mathcal{Q} = \mathcal{Q}'$.
Further, we say that $\mathcal{Q}^+ \mathcal{Q}'^+ \varphi$ has \emph{one quantifier alternation} (or lies in the \emph{one-alternation fragment}) iff $\mathcal{Q} \neq \mathcal{Q}'$. 

The semantics of $\hyperltl$ is given by the satisfaction relation $\models_\Tr$ over a set of traces $\Tr \subseteq \Sigma^\omega$.
We define an assignment $\pathassign : \pathvars \to \Sigma^\omega$ that maps trace variables to traces.
$\pathassign[\pi \mapsto t]$ updates $\pathassign$ by assigning variable $\pi$ to trace $t$. 
\begin{equation*}
\begin{array}{lll}
  \pathassign,i \models_\Tr a_\pi       \qquad \qquad & \text{iff } & a \in \pathassign(\pi)[i] \\
  \pathassign,i \models_\Tr \neg \varphi              & \text{iff } & \pathassign,i \nmodels_\Tr \varphi \\
  \pathassign,i \models_\Tr \varphi \lor \psi         & \text{iff } & \pathassign,i \models_\Tr \varphi \text{ or } \pathassign,i \models_\Tr \psi \\
  \pathassign,i \models_\Tr \X \varphi                & \text{iff } & \pathassign,i+1 \models_\Tr \varphi \\
  \pathassign,i \models_\Tr \varphi\U\psi             & \text{iff } & \exists j \geq i \ldot \pathassign, j \models_\Tr \psi \land \forall i \leq k < j \ldot \pathassign,k \models_\Tr \varphi \\
  \pathassign,i \models_\Tr \exists \pi \ldot \varphi & \text{iff } & \text{there is some } t \in \Tr \text{ such that } \pathassign[\pi \mapsto t],i \models_\Tr \varphi\\
  \pathassign,i \models_\Tr \forall \pi \ldot \varphi & \text{iff } & \text{for all } t \in \Tr \text{ it holds that } \pathassign[\pi \mapsto t],i \models_\Tr \varphi
\end{array}
\end{equation*}
We write $\Tr \models \varphi$ for $\set{},0 \models_\Tr \varphi$ where $\set{}$ denotes the empty assignment.

Every hyperproperty is an intersection of a hypersafety and a hyperliveness property~\cite{journals/jcs/ClarksonS10}. 
A \emph{hypersafety} property is one where there is a finite set of finite traces that is a bad prefix, i.e., that cannot be extended into a set of traces that satisfies the hypersafety property. 
A \emph{hyperliveness} property is a property where every finite set of finite traces can be extended to a possibly infinite set of infinite traces such that the resulting trace set satisfies the hyperliveness property. 

%-------------------------------------------------------------------------------
\paragraph{Transition Systems.}
%-------------------------------------------------------------------------------

We use transition systems as a model of computation for reactive systems.
Transition systems consume sequences over an input alphabet by transforming their internal state in every step.
Let $I$ and $O$ be a finite set of input and output propositions, respectively, and let $\Upsilon = \pow{I}$ and $\Gamma = \pow{O}$ be the corresponding finite alphabets.
A $\Gamma$-labeled $\Upsilon$-\emph{transition system} $\tsys$ is a tuple $\tuple{\T,\st_0,\tau,l}$, where
$\T$ is a finite set of states,
$\st_0 \in \T$ is the designated initial state,
$\tau \colon \T \times \Upsilon \to \T$ is the transition function, and
$l \colon \T \to \Gamma$ is the state-labeling function.
We write $\st \xrightarrow{\upsilon} \st'$ or $(\st, \upsilon, \st') \in \tau$ if $\tau(\st,\upsilon) = \st'$.
We generalize the transition function to sequences over $\Upsilon$ by defining $\tau^* \colon \Upsilon^* \to \T$ recursively as $\tau^*(\epsilon) = \st_0$ and $\tau^*(\upsilon_0 \cdots \upsilon_{n-1} \upsilon_n) = \tau(\tau^*(\upsilon_0 \cdots \upsilon_{n-1}), \upsilon_n)$ for $\upsilon_0 \cdots \upsilon_{n-1} \upsilon_n \in \Upsilon^+$.
Given an infinite word $\upsilon = \upsilon_0 \upsilon_1 \ldots \in \Upsilon^\omega$, the transition system produces an infinite sequence of outputs $\gamma = \gamma_0 \gamma_1 \gamma_2 \ldots \in \Gamma^\omega$, such that $\gamma_i = l(\tau^*(\upsilon_0 \ldots \upsilon_{i-1}))$ for every $i \geq 0$.
The resulting \emph{trace} $\rho$ is $(\upsilon_0 \cup \gamma_0) (\upsilon_1 \cup \gamma_1) \ldots \in \Sigma^\omega$ where we have $AP = I \cup O$.
The set of traces generated by $\tsys$ is denoted by $\mathit{traces}(\tsys)$.
Furthermore, we define $\varepsilon = \tuple{\{s\},s,\tau_\varepsilon,l_\varepsilon}$ as the transition system over $I = O = \emptyset$ that has only a single trace, that is $\mathit{traces}(\varepsilon) = \{\emptyset^\omega\}$. 
For this transition system, $\tau_\varepsilon(s, \emptyset) = s$ and $l_\varepsilon(s) = \emptyset$.
Given two transition systems $\tsys = \tuple{\T,\st_0,\tau,l}$ and $\tsys' = \tuple{\T',\st'_0,\tau',l'}$, we define $\tsys \times \tsys' = \tuple{\T \times \T',(\st_0, \st'_0),\tau'',l''}$ as the $\Gamma^2$-labeled $\Upsilon^2$-transition system where 
$\tau''((\st, \st'), (\upsilon, \upsilon')) = (\tau(\st, \upsilon), \tau'(\st', \upsilon'))$ and 
$l''((\st, \st')) = (l(\st), l'(\st'))$. 
A transition system $\tsys$ satisfies a general HyperLTL formula $\varphi$, if, and only if, $\mathit{traces}(\tsys) \models \varphi$. 

%-------------------------------------------------------------------------------
\paragraph{Automata.}
%-------------------------------------------------------------------------------

An alternating parity automaton $\aut$ over a finite alphabet $\Sigma$ is a tuple $\tuple{Q,q_0,\delta,\acc}$, where
$Q$ is a finite set of states,
$q_0 \in Q$ is the designated initial state,
$\delta \colon Q \times \Sigma \to \bool^+(Q)$ is the transition function, and
$\acc \colon Q \to C$ is a function that maps states of $\aut$ to a finite set of colors $C \subset \nat$.
For $C = \set{0,1}$ and $C = \set{1,2}$, we call $\aut$ a co-B\"uchi and B\"uchi automaton, respectively, and we use the sets $F \subseteq Q$ and $B \subseteq Q$ to represent the rejecting ($C=1$) and accepting ($C=2$) states in the respective automaton (as a replacement of the coloring function $\acc$).
A safety automaton is a B\"uchi automaton where every state is accepting.
The transition function $\delta$ maps a state $q \in Q$ and some $a \in \Sigma$ to a positive Boolean combination of successor states $\delta(q, a)$.
An automaton is \emph{non-deterministic} or \emph{universal} if $\delta$ is purely disjunctive or conjunctive, respectively.

A run of an alternating automaton is a $Q$-labeled tree.  
A tree $T$ is a subset of $\nat_{>0}^*$ such that for every node $n \in \nat^*_{>0}$ and every positive integer $i \in \nat_{>0}$, if $n \cdot i \in T$ then (i) $n \in T$ (i.e., $T$ is prefix-closed), and (ii) for every $0 < j < i$, $n \cdot j \in T$.
The root of $T$ is the empty sequence $\epsilon$ and for a node $n \in T$, $| n |$ is the length of the sequence $n$, in other words, its distance from the root.
A run of $\mathcal{A}$ on an infinite word $\rho \in \Sigma^\omega$ is a $Q$-labeled tree $(T,r)$ such that $r(\epsilon) = q_0$ and for every node $n \in T$ with children $n_1, \dots, n_k$ the following holds: 
$1 \leq k \leq |Q|$ and $\{r(n_1), \dots, r(n_k)\} \models \delta(q,\rho[i])$, where $q = r(n)$ and $i = |n|$.
A path is accepting if the highest color appearing infinitely often is even.
A run is accepting if all its paths are accepting.
The language of $\aut$, written $\lang(\aut)$, is the set $\set{\rho \in \Sigma^\omega \mid \aut \text{ accepts } \rho}$.
A transition system $\tsys$ is accepted by an automaton $\aut$, written $\tsys \models \aut$, if $\mathit{traces}(\tsys) \subseteq \lang(\aut)$.

\vspace{-5pt}
%-------------------------------------------------------------------------------
\paragraph{Strategies.}
%-------------------------------------------------------------------------------

Given two disjoint finite alphabets $\Upsilon$ and $\Gamma$, a strategy $\sigma \colon \Upsilon^* \to \Gamma$ is a mapping from finite histories of $\Upsilon$ to $\Gamma$.
A transition system $\tsys = \tuple{\T,\st_0,\tau,l}$ \emph{generates} the strategy $\sigma$ if $\sigma(\vec\upsilon) = l(\tau^*(\vec\upsilon))$ for every $\vec\upsilon \in \Upsilon^*$. 
A strategy $\sigma$ is called \emph{finite-state} if there exists a transition system that generates $\sigma$.

In the following, we use finite-state strategies to modify the inputs of transition systems.
Let $\tsys = \tuple{\T,\st_0,\tau,l}$ be a transition system over input and output alphabets $\Upsilon$ and $\Gamma$ and let $\sigma \colon (\Upsilon')^* \to \Upsilon$ be a finite-state strategy.
Let $\tsys' = \tuple{\T',\st_0',\tau',l'}$ be the transition system implementing $\sigma$, then $\tsys\ ||\ \sigma = \tsys \ ||\ \tsys'$ is the transition system $\tuple{\T \times \T',(\st_0, \st_0'),\tau^{||},l^{||}}$ where
$\tau^{||} \colon (\T \times \T') \times \Upsilon' \to  (\T \times \T')$ is defined as $\tau^{||}((\st,\st'), \upsilon') = (\tau(\st, l'(\st')), \tau'(\st', \upsilon'))$ and
$l^{||} \colon (\T \times \T') \to \Gamma$ is defined as $l^{||}(\st,\st') = l(\st)$ for every $\st \in \T$, $\st' \in \T'$, and $\upsilon' \in \Upsilon'$.

\vspace{-5pt}
%-------------------------------------------------------------------------------
\paragraph{Model Checking HyperLTL.}
\label{sec:alt-freeMC}
%-------------------------------------------------------------------------------

We recap the model checking of universal HyperLTL formulas.
This case, as well as the dual case of only existential quantifiers, is well-understood and, in fact, efficiently implemented in the model checker $\mc$~\cite{conf/cav/FinkbeinerRS15}.
The principle behind the model checking approach is \emph{self-composition}, where we check a standard trace property on a composition of an appropriate number of copies of
the given system.

Let $\zip$ denote the function that maps an $n$-tuple of sequences to a single sequence of $n$-tuples, for example, $\zip([1, 2, 3], [4, 5, 6]) = [(1, 4), (2, 5), (3, 6)]$, and let $\unzip$ denote its inverse.
Given $\tsys = \tuple{\T,\st_0,\tau,l}$, the $n$-fold self-composition of $\tsys$ is the transition system $\tsys^n = \tuple{\T^n, \vec{\st_0'}, \tau_n, l_n}$, where 
$\vec{\st_0'} \coloneqq (\st_0,\dots,\st_0) \in \T^n$,
$\tau_n(\vec{\st}, \vec{\upsilon}) \coloneqq \tau \circ \zip(\vec{\st}, \vec{\upsilon})$ and $l_n(\vec{\st}) \coloneqq l \circ \vec{\st}$ for every $\vec{\st} \in \T^n$ and $\vec{\upsilon} \in \Upsilon^n$.
If $\mathit{traces}(\tsys)$ is the set of traces generated by $\tsys$, then $\set{ \zip(\rho_1,\dots,\rho_n) \mid \rho_1,\dots,\rho_n \in \mathit{traces}(\tsys) }$ is the set of traces generated by $\tsys^n$.
We use the notation $\zip(\varphi, \pi_1, \pi_2, \ldots, \pi_n)$ for some HyperLTL formula $\varphi$ to combine the trace variables $\pi_1, \pi_2, \ldots, \pi_n$ (occurring free in $\varphi$) into a fresh trace variable $\pi^*$.

\vspace{-5pt}
\begin{theorem}[Self-composition for universal HyperLTL formulas~\cite{conf/cav/FinkbeinerRS15}]
For a transition system $\tsys$ and a HyperLTL formula of the form $\forall \pi_1. \forall \pi_2. \ldots \forall \pi_n.\ \varphi$ it holds that $\tsys \models \forall \pi_1. \forall \pi_2. \ldots \forall \pi_n.\ \varphi$ iff $\tsys^n \models \forall \pi^*.\ \zip(\varphi,\pi_1,\pi_2,\ldots,\pi_n)$.
\end{theorem}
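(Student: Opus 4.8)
The plan is to prove both directions of the iff by establishing a precise bijective correspondence between $n$-tuples of traces of $\tsys$ and individual traces of the self-composition $\tsys^n$, mediated by the $\zip$/$\unzip$ operations. The whole statement should reduce to a bookkeeping lemma: for any traces $\rho_1,\dots,\rho_n \in \mathit{traces}(\tsys)$ and any quantifier-free body $\varphi$ with free variables among $\pi_1,\dots,\pi_n$, the assignment $\pathassign$ with $\pathassign(\pi_j) = \rho_j$ satisfies $\pathassign,0 \models_{\mathit{traces}(\tsys)} \varphi$ exactly when the single trace $\zip(\rho_1,\dots,\rho_n)$, assigned to $\pi^*$, satisfies $\zip(\varphi,\pi_1,\ldots,\pi_n)$ over $\mathit{traces}(\tsys^n)$.

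First I would fix notation for the renaming $\zip(\varphi,\pi_1,\ldots,\pi_n)$: it replaces each atomic proposition $a_{\pi_j}$ occurring in $\varphi$ by a corresponding proposition $a^{(j)}_{\pi^*}$ over the product alphabet $\Sigma^n = (2^{\ap})^n$, where $a^{(j)}$ reads the $j$-th component of the tuple. Then I would prove by structural induction on $\varphi$ (and on the temporal nesting, i.e., quantifying over all time indices $i$) that
\begin{equation*}
  \pathassign,i \models_{\mathit{traces}(\tsys)} \varphi \quad\Longleftrightarrow\quad [\pi^* \mapsto \zip(\rho_1,\dots,\rho_n)],i \models_{\mathit{traces}(\tsys^n)} \zip(\varphi,\pi_1,\ldots,\pi_n).
\end{equation*}
The base case $a_{\pi_j}$ holds because $a \in \rho_j[i]$ iff $a^{(j)} \in \zip(\rho_1,\dots,\rho_n)[i]$, which is just the definition of $\zip$ applied pointwise. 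The Boolean and temporal cases ($\neg$, $\lor$, $\X$, $\U$) are immediate since the satisfaction definitions refer only to the same position index $i$ on both sides and $\zip$ preserves positions; the suffix operation commutes with $\zip$, i.e. $\zip(\rho_1,\dots,\rho_n)[i,\infty] = \zip(\rho_1[i,\infty],\dots,\rho_n[i,\infty])$.

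With this equivalence in hand, the two directions follow by unfolding the universal quantifiers. For the forward direction, assume $\tsys \models \forall\pi_1\ldots\forall\pi_n.\ \varphi$; given an arbitrary trace $\rho^* \in \mathit{traces}(\tsys^n)$, the characterization of product traces stated in the excerpt gives $\rho^* = \zip(\rho_1,\dots,\rho_n)$ for some $\rho_1,\dots,\rho_n \in \mathit{traces}(\tsys)$, and the induction lemma transfers satisfaction of $\varphi$ to satisfaction of $\zip(\varphi,\ldots)$ by $\rho^*$. The converse direction uses the same product-trace characterization in reverse: any $n$-tuple of traces of $\tsys$ arises as $\unzip$ of some trace of $\tsys^n$.

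The main obstacle I expect is not conceptual but definitional precision: one must pin down exactly how the renamed propositions $a^{(j)}$ are interpreted over $\Sigma^n$ and verify the product-trace characterization $\set{\zip(\rho_1,\dots,\rho_n)} = \mathit{traces}(\tsys^n)$ carefully, since everything hinges on the fact that the self-composition lets each copy's input vary \emph{independently}. If the copies were forced to share inputs, the correspondence would fail. Fortunately this independence is exactly what $\tau_n(\vec{\st},\vec{\upsilon}) = \tau \circ \zip(\vec{\st},\vec{\upsilon})$ encodes, and the excerpt already asserts the trace characterization, so the remaining work is routine verification rather than genuine difficulty.
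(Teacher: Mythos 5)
This paper does not actually prove the theorem---it is imported as a background result from~\cite{conf/cav/FinkbeinerRS15} and stated without proof---so there is no in-paper argument to compare against. On its own merits your proposal is correct and is the standard argument: the trace-set characterization $\mathit{traces}(\tsys^n) = \set{\zip(\rho_1,\dots,\rho_n) \mid \rho_1,\dots,\rho_n \in \mathit{traces}(\tsys)}$ (which the paper asserts in its preliminaries), combined with the position-preserving structural-induction transfer lemma for the quantifier-free body, and then unfolding the universal quantifiers in both directions.
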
  
\vspace{-5pt}
\begin{theorem}[Complexity of model checking universal formulas~\cite{conf/cav/FinkbeinerRS15}]
The model checking problem for universal HyperLTL formulas is PSPACE-complete in the size of the formula and NLOGSPACE-complete in the size of the transition system.
\end{theorem}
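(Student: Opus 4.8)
The plan is to reduce the problem to standard LTL model checking and then account for space carefully in the two parameters separately. By the preceding self-composition theorem, $\tsys \models \forall\pi_1.\cdots\forall\pi_n.\,\varphi$ holds iff $\tsys^n \models \forall\pi^*.\,\psi$, where $\psi = \zip(\varphi,\pi_1,\ldots,\pi_n)$ is a quantifier-free (hence essentially LTL) formula over the $n$-fold product alphabet and $|\psi|$ is linear in $|\varphi|$. Since a single outermost universal quantifier over $\tsys^n$ asks exactly whether every trace of $\tsys^n$ satisfies $\psi$, the claim reduces to establishing the expression complexity (\pspace{} in the formula) and the data complexity (\nlogspace{} in the system) of LTL model checking, lifted through the $|\T|^n$ blow-up of the self-composition.

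For the upper bounds, I would negate $\psi$ and construct a nondeterministic B\"uchi automaton $\aut_{\neg\psi}$ with $2^{O(|\psi|)}$ states recognizing the $\psi$-violating traces. Then $\tsys^n \models \forall\pi^*.\,\psi$ holds iff $\mathit{traces}(\tsys^n)\cap\lang(\aut_{\neg\psi})=\emptyset$, i.e., iff the product of $\tsys^n$ with $\aut_{\neg\psi}$ has no reachable accepting lasso. The key point is that this nonemptiness test is a nondeterministic reachability question that can be carried out on the fly, guessing the lasso step by step while storing only a single product state. A product state consists of an $n$-tuple of $\tsys$-states together with one state of $\aut_{\neg\psi}$, so it occupies $O(n\log|\T| + |\psi|)$ bits, and each successor can be computed locally without materializing either $\tsys^n$ or $\aut_{\neg\psi}$. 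Reading this bound in the formula parameter (with $n\le|\psi|$) gives space polynomial in $|\psi|$, hence membership in \pspace{}; reading it in the system parameter (with the formula, and therefore $n$ and $\aut_{\neg\psi}$, fixed) gives $O(\log|\T|)$ space, hence membership in \nlogspace{}.

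For the matching lower bounds, I would observe that universal HyperLTL subsumes ordinary LTL model checking: a formula $\forall\pi.\,\varphi$ in which every atomic proposition is indexed by $\pi$ is satisfied by $\tsys$ exactly when every trace of $\tsys$ satisfies the LTL formula obtained by erasing the index. Thus the known \pspace{}-hardness of LTL model checking in the formula (e.g., by encoding polynomially space-bounded computations, or reducing from LTL validity over a small fixed system) transfers verbatim to give \pspace{}-hardness in $|\varphi|$, and the \nlogspace{}-hardness of LTL model checking in the structure --- obtained by reducing (the complement of) graph reachability, using that \nlogspace{} is closed under complement --- transfers to give \nlogspace{}-hardness in $|\T|$, witnessed already by a fixed formula such as $\forall\pi.\,\G\,\neg b_\pi$. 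The main obstacle is not any single reduction but the space bookkeeping in the upper bound: one must verify that the $|\T|^n$ states of the self-composition and the exponentially many states of $\aut_{\neg\psi}$ are only ever explored implicitly, so that the same on-the-fly procedure simultaneously witnesses the polynomial-in-formula and logarithmic-in-system space bounds.
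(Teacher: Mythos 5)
This theorem appears in the paper's preliminaries as a recap imported from~\cite{conf/cav/FinkbeinerRS15} without any proof, so there is no in-paper argument to compare against; your proposal is, however, a correct reconstruction of the standard argument that the cited work relies on. Your route --- self-composition to reduce to a single universally quantified quantifier-free formula, on-the-fly nonemptiness of the product of $\tsys^n$ with the B\"uchi automaton for the negated formula (giving \pspace{} in the formula via nondeterministic polynomial space, and \nlogspace{} in the system once the formula is fixed), plus the usual hardness transfers from LTL model checking and from graph non-reachability with a fixed formula like $\forall\pi\ldot \G \neg b_\pi$ --- is exactly the intended proof, and the bookkeeping you highlight (never materializing $\tsys^n$ or the exponential automaton) is indeed the crux.
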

The complexity of verifying universal HyperLTL formulas is exactly the same as the complexity of verifying LTL formulas.
For HyperLTL formulas with quantifier alternations, the model checking problem is significantly more difficult.
\vspace{-5pt}
\begin{theorem}[Complexity of model checking formulas with one quantifier alternation~\cite{conf/cav/FinkbeinerRS15}]
The model checking problem for HyperLTL formulas with one quantifier alternation is in EXPSPACE in the size of the formula and in PSPACE in the size of the transition system.
\end{theorem}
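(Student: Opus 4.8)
The plan is to decide the problem with a standard automata-theoretic pipeline over self-compositions, arranging the quantifiers so that the single alternation forces only \emph{one} B\"uchi complementation, and then to read off the space bounds from the resulting automaton sizes. By duality it suffices to treat formulas of the form $\forall \pi_1 \ldots \forall \pi_k \ldot \exists \pi_{k+1} \ldots \exists \pi_n \ldot \psi$ with $\psi$ quantifier-free: a formula $\exists^* \forall^* \psi$ holds on $\tsys$ iff $\forall^* \exists^* \neg\psi$ fails on $\tsys$, and both \textsc{ExpSpace} and $\pspace$ are closed under complement (Savitch / Immerman--Szelep\-cs\'enyi), so the dual case lands in the same classes. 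Reading $\psi$ as an LTL formula over the product alphabet $\Sigma^n$ (identifying $a_{\pi_i}$ with coordinate $i$), I would first build a nondeterministic B\"uchi automaton $\aut_\psi$ over $\Sigma^n$ with $\lang(\aut_\psi) = \set{ \zip(\rho_1,\ldots,\rho_n) \mid (\rho_1,\ldots,\rho_n) \models \psi}$ and $\card{\aut_\psi} = 2^{O(\card{\psi})}$, via the usual LTL-to-B\"uchi translation.

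Next I would eliminate the inner existential block \emph{without any complementation}. To strip the innermost quantifier $\exists \pi_n$, I intersect $\aut_\psi$ with $\tsys$ on coordinate $n$ --- the product only accepts letters whose $n$-th $\Sigma$-component is consistent with a move of $\tsys$, constraining $\rho_n$ to a system trace --- and then project the alphabet from $\Sigma^n$ down to $\Sigma^{n-1}$ by existentially guessing the dropped component. Because the automaton stays nondeterministic, projection is free and preserves the language, at the cost of one factor of $\card{\tsys}$ in the state count. Iterating over coordinates $n, n-1, \ldots, k+1$ yields a nondeterministic B\"uchi automaton $\aut_k$ over $\Sigma^k$ that accepts exactly the $k$-tuples of system traces satisfying $\exists \pi_{k+1} \ldots \exists \pi_n \ldot \psi$, with $\card{\aut_k} = 2^{O(\card{\psi})} \cdot \card{\tsys}^{\,n-k} =: m$.

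Finally I would discharge the whole universal block at once. Using the self-composition idea of the universal case, $\tsys \models \forall \pi_1 \ldots \forall \pi_k \ldot (\exists\text{-part})$ holds iff $\mathit{traces}(\tsys^k) \subseteq \lang(\aut_k)$, i.e.\ iff $\mathit{traces}(\tsys^k) \cap \overline{\lang(\aut_k)} = \emptyset$. This requires complementing $\aut_k$ exactly once; nondeterministic B\"uchi complementation costs a single exponential, giving an automaton whose states are describable in $O(m \log m)$ bits. I never build this complement explicitly: instead I check nonemptiness of the product of $\tsys^k$ (size $\card{\tsys}^k$) with $\overline{\aut_k}$ on the fly, guessing an accepting lasso. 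This runs in nondeterministic space $O(k \log \card{\tsys} + m\log m)$. Substituting $m = 2^{O(\card{\psi})}\cdot\card{\tsys}^{\,n-k}$: as a function of $\card{\psi}$ (system fixed) this is $2^{O(\card{\psi})}$, i.e.\ \textsc{ExpSpace}; as a function of $\card{\tsys}$ (formula fixed, so $n,k,\card{\psi}$ constant) it is polynomial, i.e.\ $\pspace$. Savitch's theorem removes the nondeterminism and closure under complement turns this nonemptiness test into the model-checking answer within the same classes.

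The main obstacle is the complementation step and its interaction with the space budget: I must argue that ordering the quantifiers as above really incurs only one complementation (all universal quantifiers being absorbed into a single self-composition $\tsys^k$), and that complement states admit an $O(m\log m)$-bit encoding with an on-the-fly computable transition relation, so that the doubly-exponential automaton is never materialized and the guess-a-lasso emptiness check stays within $2^{O(\card{\psi})}$ space in the formula and polynomial space in the system. Keeping the existential block complementation-free is exactly what prevents a second exponential and pins the formula complexity at \textsc{ExpSpace} rather than higher.
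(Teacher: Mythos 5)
Your proposal is correct and follows essentially the same route as the result this paper cites (the theorem is stated here without proof, by reference to~\cite{conf/cav/FinkbeinerRS15}): handle the existential block by product with system copies and projection, keeping the automaton nondeterministic, then discharge the universal block via a single B\"uchi complementation of that automaton---which is exactly the ``complementation of the system behavior'' the paper's introduction alludes to---followed by an on-the-fly emptiness check of the product with $\tsys^k$. Your space accounting (rank-based complement states in $O(m \log m)$ bits, never materializing the doubly-exponential automaton, then Savitch) matches the standard analysis yielding \textsc{ExpSpace} in the formula and $\pspace$ in the system.
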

One way to circumvent this complexity is to fix the existential choice and strengthen the formula to the universal fragment~\cite{conf/cav/FinkbeinerRS15,conf/esop/DArgenioBBFH17,conf/cav/FinkbeinerHLST18}.
While avoiding the complexity problem, this transformation requires deep knowledge of the system, is prone to errors, and cannot be verified automatically as the problem of checking implications becomes undecidable~\cite{conf/concur/FinkbeinerH16}.
In the following section, we present a technique that circumvents the complexity problem while still inheriting strong correctness guarantees.
Further, we provide a method that can, under certain restrictions, derive a strategy for the existential choice automatically.

\vspace{-5pt}
%===============================================================================
\section{Model Checking with Quantifier Alternations}
\label{sec:modelchecking}
%===============================================================================
\vspace{-5pt}
%-------------------------------------------------------------------------------
\subsection{Model Checking with Given Strategies}
\label{sec:givenstrat}
%-------------------------------------------------------------------------------

Our first goal is the verification of HyperLTL formulas with one quantifier alternation, i.e., formulas of the form 
$\forall^* \exists^* \varphi$  or $\exists^* \forall^* \varphi$, where $\varphi$ is a quantifier-free formula. 
Note that the presented techniques can, similar to skolemization, be extended to more than one quantifier alternation. 
Quantifier alternation introduces dependencies between the quantified traces. 
In a $\forall^*\exists^* \varphi$ formula, the choices of the existential quantifiers depend on the choices of the universal quantifiers preceding them. 
In a formula of the form $\exists^* \forall^* \varphi$, however, there has to be a single choice for the existential quantifiers that works for all choices of the universal quantifiers. 
In this case, the existentially quantified variables do not depend on the universally quantified variables. 
Hence, the witnesses for the existential quantifiers are traces rather than functions that map tuples of traces to traces.
As established above, the model checking problem for HyperLTL formulas with quantifier alternation is known to be significantly more difficult than the model checking problem for universal formulas.

Our verification technique for formulas with quantifier alternation is to substitute strategic choice for existential choice. As discussed in the introduction, the existence of a strategy implies the existence of a trace. 

\begin{theorem}[Substituting Strategic Choice for Existential Choice]
  \label{theo:stratex}
  Let $\tsys$ be a transition system over input alphabet $\Upsilon$. \\
  It holds that $\tsys \models \forall \pi_1 \forall \pi_2 \ldots \forall \pi_n.\ \exists \pi_1' \exists \pi_2' \ldots\exists \pi_m'.\  \varphi$ if there is a strategy $\sigma: (\Upsilon^n)^* \rightarrow \Upsilon^m$ such that $\tsys^n \times (\tsys^m \ ||\ \sigma) \models \forall \pi^*. \zip(\varphi, \pi_1, \pi_2, \ldots \pi_n, \pi_1', \pi'_2, \ldots, \pi'_m)$. \\
  It holds that  $\tsys \models \exists \pi_1 \exists \pi_2 \ldots \exists \pi_m.\ \forall \pi_1' \forall \pi_2' \ldots\forall \pi_n'.\  \varphi$ if there is a strategy $\sigma: (\Upsilon^0)^* \rightarrow \Upsilon^m$ such that $ (\tsys^m \ ||\ \sigma) \times \tsys^n \models \forall \pi^*. \zip(\varphi, \pi_1, \pi_2, \ldots \pi_m, \pi_1', \pi'_2, \ldots, \pi'_n)$.
\end{theorem}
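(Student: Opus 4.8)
The plan is to prove both implications by a direct semantic argument: the strategy $\sigma$ supplies, for every choice of the universal traces, the Skolem witnesses demanded by the existential quantifiers. I would prove the first statement in detail and obtain the second as the degenerate case where the strategy's domain is trivial. The property I would lean on throughout is that transition systems are input-deterministic --- since $\tau$ is a function, every $\rho \in \mathit{traces}(\tsys)$ is produced by a unique input word $\upsilon \in \Upsilon^\omega$ --- which lets me pass freely between tuples of traces and the input words generating them.

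For the first statement, assume $\sigma\colon (\Upsilon^n)^* \to \Upsilon^m$ satisfies $\tsys^n \times (\tsys^m \ ||\ \sigma) \models \forall \pi^*.\, \zip(\varphi, \pi_1, \ldots, \pi_n, \pi_1', \ldots, \pi_m')$, where the composed system is driven by a single universal input word in $(\Upsilon^n)^\omega$ that is consumed by $\tsys^n$ and simultaneously read by $\sigma$ inside $\tsys^m \ ||\ \sigma$. Fix arbitrary $\rho_1, \ldots, \rho_n \in \mathit{traces}(\tsys)$ and let $\vec\upsilon \in (\Upsilon^n)^\omega$ be the pointwise tuple of the input words generating them, so that $\tsys^n$ on $\vec\upsilon$ produces $\zip(\rho_1, \ldots, \rho_n)$. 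Running the second component on the same $\vec\upsilon$, the strategy reads each finite prefix and emits inputs to $\tsys^m$, which produce traces $\rho_1', \ldots, \rho_m'$; these are genuine elements of $\mathit{traces}(\tsys)$, because $\tsys^m \ ||\ \sigma$ merely runs $m$ honest copies of $\tsys$ whose inputs happen to be chosen by $\sigma$. The overall trace of the composed system is exactly $\zip(\rho_1, \ldots, \rho_n, \rho_1', \ldots, \rho_m')$, so by the model-checking hypothesis together with the self-composition correspondence for $\pi^*$, the assignment $\pi_i \mapsto \rho_i$, $\pi_j' \mapsto \rho_j'$ satisfies $\varphi$. As the $\rho_j'$ witness the existential quantifiers and $\rho_1, \ldots, \rho_n$ were arbitrary, the universal-existential formula holds.

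The second statement follows by the same template once one notes that $\Upsilon^0$ is a singleton, so a strategy $\sigma\colon (\Upsilon^0)^* \to \Upsilon^m$ is nothing but a fixed infinite input word to $\tsys^m$ that does not depend on any later choice --- correctly modelling that in $\exists^* \forall^*$ the existential witnesses must be committed before the universal traces are seen, hence a single fixed tuple $\rho_1, \ldots, \rho_m$ must work against all $\rho_1', \ldots, \rho_n'$. I expect the main obstacle to be the bookkeeping that links a trace of the product to a well-formed assignment for $\varphi$: one must check that both components are driven by the \emph{same} universal input word, so that the universal copies and the strategy see a consistent history, and that the zip of the $n + m$ component traces coincides with the single trace variable $\pi^*$ exactly as in the self-composition theorem. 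Once this correspondence is pinned down the rest is routine, and it also explains why only the implication holds: a correct witness $\rho_j'$ may depend on the entire infinite future of $\vec\upsilon$, which no finite-prefix strategy can foresee.
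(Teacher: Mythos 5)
Your proposal is correct and follows essentially the same route as the paper's proof: the strategy is used to Skolemize the existential quantifiers, with the witness traces obtained by feeding the (prefixes of the) universal input word into $\sigma$ and running the resulting inputs through copies of $\tsys$, and the $\exists^*\forall^*$ case handled as the degenerate instance over the singleton alphabet $\Upsilon^0$. The paper's proof is just a terser version of this (it directly defines the witness input sequence $\upsilon_i = \sigma(\upsilon'_0 \cdots \upsilon'_{i-1})$), while you additionally spell out the trace-to-input-word correspondence and the zip/self-composition bookkeeping that the paper leaves implicit.
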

\begin{proof}
  Let $\sigma$ be such a strategy, then we define a witness for the existential trace quantifiers $\exists \pi_1'\exists\pi_2'\ldots\exists \pi_m'$ as the sequence of inputs $\upsilon = \upsilon_0 \upsilon_1 \ldots \in (\Upsilon^m)^\omega$ such that $\upsilon_i = \sigma(\upsilon'_0 \upsilon'_1 \ldots \upsilon'_{i-1})$ for every $i \geq 0$ and every $\upsilon'_i \in \Upsilon^n$;
analogously, we define a witness for the existential trace quantifiers $\exists \pi_1\exists\pi_2\ldots\exists \pi_m$ as the sequence of inputs  $\upsilon = \upsilon_0 \upsilon_1 \ldots \in (\Upsilon^m)^\omega$ such that $\upsilon_i = \sigma(\upsilon'_0 \upsilon'_1 \ldots \upsilon'_{i-1})$ for every $i \geq 0$ and every $\upsilon'_i \in \Upsilon^0$.
  \qed
\end{proof} 
An application of the theorem reduces the verification problem of a HyperLTL formula with one quantifier alternation to the verification problem of a universal $\hyperltl$ formula. 
If a sufficiently small strategy can be found, the reduction in complexity is substantial:
\begin{corollary}[Model checking with Given Strategies]
The model checking problem for HyperLTL formulas with one quantifier alternation and given strategies for the existential quantifiers is in PSPACE in the size of the formula and NLOGSPACE in the size of the product of the strategy and the system.
\end{corollary}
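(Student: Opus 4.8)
The plan is to chain the two reductions already available in the excerpt. Given the system $\tsys$, a one-alternation formula $\forall \pi_1 \ldots \forall \pi_n.\, \exists \pi_1' \ldots \exists \pi_m'.\ \varphi$, and a finite-state strategy $\sigma$ presented as a transition system, Theorem~\ref{theo:stratex} tells us that the sufficient condition we must check is exactly whether the single product system $\tsys^n \times (\tsys^m \ ||\ \sigma)$ satisfies the \emph{universal} formula $\forall \pi^*.\ \zip(\varphi, \pi_1, \ldots, \pi_n, \pi_1', \ldots, \pi_m')$; the dual $\exists^* \forall^*$ case is handled symmetrically by the second part of the theorem. This turns the given-strategy verification task into a plain instance of universal HyperLTL model checking, to which I would apply the complexity result for universal formulas stated above, and then read off the two size bounds separately.

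For the formula bound, I would observe that $\zip$ only renames the indexed atoms $a_{\pi_j}$ into atoms of the fresh combined trace $\pi^*$ and prepends a single universal quantifier, so the resulting formula is linear in $|\varphi|$. Since model checking universal HyperLTL is PSPACE in the formula, the first bound follows. The one thing to verify is that the self-composition degree $n+m$, which the universal model-checking algorithm uses internally, is bounded by $|\varphi|$ and therefore stays inside the PSPACE-in-formula envelope.

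For the system bound, the key point is that after the reduction we are left with a \emph{single} universal quantifier over the product $\tsys^n \times (\tsys^m \ ||\ \sigma)$, so no further self-composition is needed: model checking collapses to LTL model checking of the body over this product, i.e.\ to checking emptiness of the product of the transition system with the B\"uchi automaton for the negated, zipped body. Emptiness is graph reachability, solvable in NLOGSPACE in the number of states of the product --- which is precisely the ``product of the strategy and the system'' in the statement. The main obstacle, and the point I would be careful about, is the bookkeeping of the self-composition: $\tsys^n \times (\tsys^m \ ||\ \sigma)$ has on the order of $|\T|^{n+m}\cdot|\sigma|$ states, exponential in the number of quantifiers. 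This does not break the bound only because the corollary deliberately measures system complexity in the size of the \emph{product}, treating the self-composed, strategy-restricted system as the input object; the exponential in $n+m$ is thereby absorbed into that measure, while the residual dependence on $n+m$ (building the zipped formula and the automaton) remains inside the PSPACE-in-formula bound. I would make sure the NLOGSPACE procedure operates directly on the given product rather than re-expanding it.
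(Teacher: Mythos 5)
Your proposal is correct and follows essentially the same route as the paper, which derives this corollary directly by chaining Theorem~\ref{theo:stratex} with the stated complexity of universal HyperLTL model checking (reducing to a single universal quantifier over $\tsys^n \times (\tsys^m\ ||\ \sigma)$ and measuring system complexity in the size of that product). Your additional bookkeeping --- linearity of the zipped formula and the NLOGSPACE lasso search on the given product --- just makes explicit what the paper leaves implicit.
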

Note that the converse of Theorem~\ref{theo:stratex} is not in general true. The satisfaction of a $\forall^* \exists^* \, \hyperltl$ formula does not imply the existence of a strategy, because  at any given point in time the strategy only knows about a finite prefix of the universally quantified traces. 
Consider the formula $\forall \pi \exists \pi'. \Next a_\pi \leftrightarrow a_{\pi'}$ and a system that can produce arbitrary sequences of $a$ and $\neg a$. 
Although the system satisfies the formula, it is not possible to give a strategy that allows us to prove this fact. 
Whatever choice our strategy makes, the next move of the $\forall$-player can make sure that the strategy's choice was wrong. 
In the following, we present a method that addresses this problem.

%-------------------------------------------------------------------------------
\subsubsection{Prophecy Variables.}
\label{sec:prophecy}
%-------------------------------------------------------------------------------

A classic technique for resolving future dependencies is the introduction of \emph{prophecy variables}~\cite{journals/tcs/AbadiL91}. 
Prophecy variables are auxiliary variables that are added to the system without affecting the behavior of the system. 
Such variables can be used to make predictions about the future.

We use prophecy variables to define strategies that depend on the future. 
In the example discussed above, $\forall \pi \exists \pi'. \Next a_\pi \leftrightarrow a_{\pi'}$, the choice of the value of $a_{\pi'}$ in the first position depends on the value of $a_\pi$ in the second position. 
We introduce a prophecy variable $p$ that predicts in the first position whether $a_\pi$ is true in the second position.
With the prophecy variable, there exists a strategy that correctly assigns the value of $p$ whenever the prediction is correct: 
The strategy chooses to set $a_{\pi'}$ if, and only if, $p$ holds. 

Technically, the proof technique introduces a set of fresh input variables $P$ into the system. 
For a $\Gamma$-labeled $\Upsilon$-transition system $\tsys = \tuple{\T,\st_0,\tau,l}$, we define the $\Gamma$-labeled $(\Upsilon \cup P)$-transition system $\tsys^P = \tuple{\T,\st_0,\tau^P,l}$ including the inputs $P$ where $\tau^P \colon \T \times (\Upsilon \cup P) \to \T$. 
For all $\st \in \T$ and $\upsilon^P \in \Upsilon \cup P$, $\tau^P(\st, \upsilon^P) = \tau(\st, \upsilon)$ for $\upsilon \in \Upsilon$ obtained by removing the variables in $P$ from $\upsilon^P$ (i.e., $\upsilon =_{\setminus P} \upsilon^P$). 
Moreover, the proof technique modifies the specification so that the original property only needs to be satisfied if the prediction is actually correct. 
We obtain the modified specification
$\forall \pi \exists \pi'. (p_{\pi} \leftrightarrow \Next a_\pi)\ \rightarrow\ (\Next a_\pi \leftrightarrow a_{\pi'})$
in our example.
The following theorem describes the general technique for one prophecy variable.

\begin{theorem}[Model checking with Prophecy Variables]
  For a transition system $\tsys$ and a quantifier-free formula $\varphi$, let $\psi$ be a quantifier-free formula over the universally quantified trace variables  $\pi_1,\pi_2 \ldots \pi_n$ and let $p$ be a fresh atomic proposition.
  It holds that $\tsys \models \forall \pi_1 \forall \pi_2 \ldots \forall \pi_n.\ \exists \pi_1' \exists \pi_2' \ldots\exists \pi_m'.\  \varphi$
  if, and only if, $\tsys^{\set{p}} \models \forall \pi_1 \forall \pi_2 \ldots \forall \pi_n.\ \exists \pi_1' \exists \pi_2' \ldots\exists \pi_m'. \Globally (p_{\pi_1} \leftrightarrow \psi) \rightarrow \varphi.$
  \label{theo:prophecy}
  \end{theorem}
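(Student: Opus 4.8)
The plan is to prove both implications via a projection/lift correspondence between traces of $\tsys$ and $\tsys^{\set{p}}$. First I would fix notation: for a trace $t$ of $\tsys^{\set{p}}$ let $\bar t$ denote the trace of $\tsys$ obtained by deleting $p$ from every position, which lies in $\mathit{traces}(\tsys)$ because $\tau^{\set{p}}$ ignores $p$; conversely, for $u \in \mathit{traces}(\tsys)$ and a bit sequence $b \in \bool^\omega$ let $u^b$ be the lift that inserts $p$ at position $i$ exactly when $b_i$ holds. Since $p$ is a free input of $\tsys^{\set{p}}$, every $u^b$ is a trace of $\tsys^{\set{p}}$ and $\overline{u^b} = u$. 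The workhorse would be an invariance lemma, proved by a routine structural induction: for every quantifier-free, $p$-free formula $\chi$ and every assignment $\pathassign$ into $\mathit{traces}(\tsys^{\set{p}})$, $\pathassign,i \models \chi$ iff $\bar{\pathassign},i \models \chi$, where $\bar{\pathassign}$ projects each trace. Because $p$ is fresh, both $\varphi$ and $\psi$ are $p$-free, so this lemma applies to them.

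For the left-to-right direction I would assume $\tsys \models \forall \pi_1 \cdots \forall \pi_n.\ \exists \pi_1' \cdots \exists \pi_m'.\ \varphi$ and fix any traces $t_1,\dots,t_n$ of $\tsys^{\set{p}}$ for the universal variables. Projecting to $\bar t_1,\dots,\bar t_n$ and applying the hypothesis yields witnesses $\bar s_1',\dots,\bar s_m' \in \mathit{traces}(\tsys)$ satisfying $\varphi$; lifting them arbitrarily (say with $p$ always false) and invoking the invariance lemma keeps $\varphi$ true under $\pi_k \mapsto t_k,\ \pi_j' \mapsto s_j'$. Since the consequent of the implication already holds, the formula $\Globally(p_{\pi_1} \leftrightarrow \psi) \rightarrow \varphi$ holds regardless of the antecedent, giving the right-hand side. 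Note that this direction never uses the prophecy at all; adding the antecedent only weakens the formula.

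The converse is where the prophecy does the work, and I would treat it as the heart of the proof. Assuming the right-hand side, fix arbitrary $u_1,\dots,u_n \in \mathit{traces}(\tsys)$ for the universal variables and feed the prophecy on $\pi_1$ with the truth of $\psi$ along this tuple: set $b_i = 1$ iff $\pathassign,i \models \psi$ under $\pi_k \mapsto u_k$. This is well defined precisely because $\psi$ ranges only over the universal variables, so its value is fixed before any existential witness is chosen. Put $t_1 = u_1^b$ and lift $u_2,\dots,u_n$ arbitrarily (their $p$-values are irrelevant, since only $p_{\pi_1}$ occurs). Applying the right-hand side to $t_1,\dots,t_n$ produces witnesses $s_1',\dots,s_m'$; by construction $p_{\pi_1}$ equals $b_i$ at each position and, by invariance, $\psi$ also evaluates to $b_i$ there, so the antecedent $\Globally(p_{\pi_1} \leftrightarrow \psi)$ is forced true and hence $\varphi$ holds in $\tsys^{\set{p}}$. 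Projecting the witnesses and applying invariance a final time gives $\varphi$ under $\pi_k \mapsto u_k,\ \pi_j' \mapsto \bar s_j'$, as required.

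I expect the one genuinely delicate step to be this construction of the $p$-labeling: it is sound only because $\tsys^{\set{p}}$ admits an arbitrary $p$-sequence over any underlying behavior (so $u_1^b$ is really a trace) and because $\psi$ mentions only universally quantified variables (so the bit sequence $b$ can be fixed ahead of the existential choice). Everything else, including the invariance lemma, is routine structural induction.
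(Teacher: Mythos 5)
Your proof is correct and follows essentially the same route as the paper's: the forward direction is pure weakening (the original formula is the consequent of the implication), and the backward direction chooses the valuation of $p$ so that the prophecy guesses correctly, which is sound precisely because $p$ is a fresh input and $\psi$ refers only to universally quantified variables. The paper's proof is a compressed version of yours; your projection/lift correspondence and the invariance lemma merely make explicit the bookkeeping between $\mathit{traces}(\tsys)$ and $\mathit{traces}(\tsys^{\set{p}})$ that the paper leaves implicit.
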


Note that $\psi$ is restricted to refer only to \emph{universally} quantified trace variables. Without this restriction, the method would not be sound. In our example, $\psi = a_{\pi'}$ would
lead to the modified formula $\forall \pi \exists \pi'. (p_{\pi} \leftrightarrow a_{\pi'})\ \rightarrow\ (\Next a_\pi \leftrightarrow a_{\pi'})$, which could be satisfied with the strategy
that assigns $a_{\pi'}$ to $\true$ iff $p_\pi$ is $\false$, and thus falsifies the assumption that the prediction is correct, rather than ensuring that the original formula is true.

\begin{proof}
It is easy to see that the original specification implies the modified specification, since the original formula is the conclusion of the implication.
Assume that the modified specification holds. 
Since the prophecy variable $p$ is a fresh atomic proposition, and $\psi$ does not refer to the existentially chosen traces, we can, for every choice of the universally quantified traces, always choose the value of $p$ such that it guesses correctly, i.e., that $p$ is true whenever $\psi$ holds.
In this case, the conclusion and therefore the original specification must be true.
\qed
\end{proof}

Unfortunately, prophecy variables do not provide a complete proof technique. 
Consider a system allowing arbitrary sequences of $a$ and $b$ and this specification:
\begin{align*}
	\forall \pi \exists \pi'. & b_{\pi'} \land \Globally (b_{\pi'} \leftrightarrow \Next \neg b_{\pi'}) \\ 
	& \land (a_{\pi'} \rightarrow (a_\pi \WUntil (b_{\pi'} \land \neg a_\pi))) \\
	& \land (\neg a_{\pi'} \rightarrow (a_\pi \WUntil (\neg b_{\pi'} \land \neg a_\pi)))
\end{align*}
Intuitively, $\pi'$ has to be able to predict whether $\pi$ will stop outputting $a$ at an even or odd position of the trace. 
There is no HyperLTL formula to be used as $\psi$ in Theorem~\ref{theo:prophecy}, because, like LTL, HyperLTL can only express non-counting properties.
It is worth noting that in our practical experiments, the incompleteness was never a problem. 
In many cases, it is not even necessary to add prophecy variables at all. 
The presented proof technique is, thus, practically useful despite this incompleteness result. 

%-------------------------------------------------------------------------------
\subsection{Model Checking with Synthesized Strategies}
\label{sec:strategysynthesis}
%-------------------------------------------------------------------------------

We now extend the model checking approach with the automatic synthesis of the strategies for the existential quantifiers.
For a given $\hyperltl$ formula of the form $\forall^n \exists^m \varphi$ and a transition system $\tsys$, we search for a transition system $\tsys_\exists = \tuple{X, x_0, \mu, l_\exists}$, where
$X$ is a set of states,
$x_0 \in X$ is the designated initial state,
$\mu \colon X \times \Upsilon^n \to X$ is the transition function, and
$l_\exists \colon X \to \Upsilon^m$ is the labeling function,
such that $\tsys^n \times (\tsys^m \ ||\ \tsys_\exists) \models \zip(\varphi)$.
(Since for formulas of the form $\exists^m \forall^n \varphi$ the problem only differs in the input of $\tsys_\exists$, we focus on $\forall \exists\, \hyperltl$.)

\newcommand\donotshow[1]{}
\donotshow{
\begin{figure}[t]
  \centering
  \subfloat[The strategy synthesis problem with given implementations]{
    \begin{tikzpicture}[->,>=stealth',shorten >=1pt,auto,node distance=1cm,semithick,scale=1,transform shape]
      \tikzstyle{black-box}=[state,rectangle,color=white,fill=black];
      \tikzstyle{white-box}=[state,rectangle];
      \node[white-box] (sys-n) {$\tsys^n$};
      \node[white-box,right=of sys-n] (sys-m) {$\tsys^m$};
      \node[black-box,above=of sys-m] (sys-exists) {$\tsys_\exists$};
      \node[state,above=of sys-n] (env) {env};

      \draw (env) edge node {$I^n$} (sys-n)
            (env) edge node {$I^n$} (sys-exists)
            (sys-exists) edge node {$I^m$} (sys-m)
            (sys-n) edge node {$O^n$} +(0,-1)
            (sys-m) edge node {$O^m$} +(0,-1)
            ;
    \end{tikzpicture}
    \label{fig:strat-synth}
  }\qquad
  \subfloat[The implementation synthesis problem with given strategy (white box)]{
    \begin{tikzpicture}[->,>=stealth',shorten >=1pt,auto,node distance=1cm,semithick,scale=1,transform shape]
      \tikzstyle{black-box}=[state,rectangle,color=white,fill=black];
      \tikzstyle{white-box}=[state,rectangle];
      \node[black-box] (sys-n) {$\tsys^n$};
      \node[black-box,right=of sys-n] (sys-m) {$\tsys^m$};
      \node[white-box,above=of sys-m] (sys-exists) {$\tsys_\exists$};
      \node[state,above=of sys-n] (env) {env};

      \draw (env) edge node {$I^n$} (sys-n)
            (env) edge node {$I^n$} (sys-exists)
            (sys-exists) edge node {$I^m$} (sys-m)
            (sys-n) edge node {$O^n$} +(0,-1)
            (sys-m) edge node {$O^m$} +(0,-1)
            ;
    \end{tikzpicture}
    \label{fig:impl-synth}
  }\qquad
  \subfloat[The synthesis problem for strategy and implementation]{
    \begin{tikzpicture}[->,>=stealth',shorten >=1pt,auto,node distance=1cm,semithick,scale=1,transform shape]
      \tikzstyle{black-box}=[state,rectangle,color=white,fill=black];
      \tikzstyle{white-box}=[state,rectangle];
      \node[black-box] (sys-n) {$\tsys^n$};
      \node[black-box,right=of sys-n] (sys-m) {$\tsys^m$};
      \node[black-box,above=of sys-m] (sys-exists) {$\tsys_\exists$};
      \node[state,above=of sys-n] (env) {env};

      \draw (env) edge node {$I^n$} (sys-n)
            (env) edge node {$I^n$} (sys-exists)
            (sys-exists) edge node {$I^m$} (sys-m)
            (sys-n) edge node {$O^n$} +(0,-1)
            (sys-m) edge node {$O^m$} +(0,-1)
            ;
    \end{tikzpicture}
    \label{fig:hyper-synth}
  }
  \caption{Representation of the different synthesis problems as a distributed synthesis problem, i.e., the problem of synthesizing implementations of the black-box processes such that the joint combination with the white-box processes (which have a given implementation), satisfy the specification.}
 \end{figure}
}

 \begin{theorem}  \label{thm:strategy-synthesis-decidable}
   The strategy realizability problem for $\forall^* \exists^*$ formulas is $\twoexptime$-complete.
 \end{theorem}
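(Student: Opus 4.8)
The plan is to treat the strategy realizability problem as a reactive synthesis problem played on the fixed arena $\tsys^n \times \tsys^m$ and to sandwich its complexity between parity-game solving, for membership, and LTL realizability, for hardness. Recall that, by the definition preceding the statement, the problem asks whether there is a finite-state strategy $\sigma$ (equivalently a $\tsys_\exists$) with $\tsys^n \times (\tsys^m \ ||\ \sigma) \models \forall \pi^*. \zip(\varphi)$. I would read this as a two-player game on $\tsys^n \times \tsys^m$: in round $i$ the $\forall$-player (environment) picks the letter $\upsilon'_i \in \Upsilon^n$ driving the $n$ universal copies, and the $\exists$-player (the strategy) responds with $\upsilon_i \in \Upsilon^m$ driving the $m$ existential copies, subject to the Moore-style causality $\upsilon_i = \sigma(\upsilon'_0 \cdots \upsilon'_{i-1})$ already fixed in the proof of Theorem~\ref{theo:stratex}. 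Since the labelings are Moore, the emitted trace letter is determined by the current states, and $\sigma$ realizes $\forall^n \exists^m \varphi$ exactly when the $\exists$-player wins for the objective that the produced trace satisfies the LTL formula $\zip(\varphi)$.

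For the upper bound I would build a deterministic parity automaton $D$ for $\zip(\varphi)$ via the standard LTL-to-nondeterministic-B\"uchi-to-deterministic-parity construction (Safra/Piterman); $D$ has doubly-exponentially many states and singly-exponentially many colors in $\card{\varphi}$. Taking the synchronous product of the arena $\tsys^n \times \tsys^m$ with $D$ yields a parity game with roughly $\card{\T}^{n+m} \cdot 2^{2^{O(\card{\varphi})}}$ positions and $2^{O(\card{\varphi})}$ colors. Solving a parity game runs in time $N^{O(c)}$ in the number of positions $N$ and colors $c$, which evaluates to $2^{2^{O(\card{\varphi})}}$, i.e.\ within $\twoexptime$ (the factor $\card{\T}^{n+m}$ is subsumed, $n+m$ being part of the formula size). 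Finite-memory determinacy of parity games guarantees that a winning $\exists$-strategy, if one exists, can be taken finite-state, hence is exactly a realizing $\sigma$; this gives the correspondence in both directions and thus membership in $\twoexptime$.

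For the lower bound I would reduce from LTL realizability, which is $\twoexptime$-complete (Pnueli--Rosner). Given an LTL specification $\phi$ over inputs $\mathit{In}$ and outputs $\mathit{Out}$, take $\tsys$ to be a small transition system generating all traces over $\pow{\mathit{In} \cup \mathit{Out}}$ (easily built with states recording the last chosen output), and consider the instance $\forall \pi. \exists \pi'.\ \tilde\phi$, where $\tilde\phi$ is $\phi$ with every input proposition read on $\pi$ and every output proposition read on $\pi'$. Because $\tsys$ admits every trace, the existential copy driven by $\sigma$ can realize an arbitrary output stream while $\sigma$ sees exactly the prefix of the universally chosen inputs; hence a realizing strategy exists if and only if $\phi$ is realizable. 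The only delicacy is the one-step causality of $\sigma$ and the fixed first letter of the generator, both of which match the Moore synthesis convention or are absorbed by a leading $\X$ in $\tilde\phi$, so the two realizability notions coincide on this instance.

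The step I expect to be the main obstacle is not conceptual but lies in making the game/strategy correspondence exact. One must verify that the information available to a parity-game $\exists$-strategy, namely the history of environment moves through the fixed arena, coincides precisely with the $\Upsilon^*$-prefix information available to $\sigma$, so that winning strategies transfer faithfully in both directions and no spurious lookahead is introduced; and one must confirm that the doubly-exponential state count composes with the singly-exponential color count to keep parity-game solving inside $\twoexptime$. Establishing this informational correspondence together with finite-state realizability is the crux of correctness, after which the automata-theoretic determinization and parity-game determinacy ingredients are entirely standard.
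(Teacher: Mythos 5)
Your proposal is correct in substance but takes a different route from the paper's on the upper bound, and its lower bound has a fixable flaw. The paper's proof is a two-line sketch: for membership it views $\tsys_\exists$ as the single black-box process in a distributed architecture whose remaining processes ($\tsys^n$ and $\tsys^m$) are white boxes with given implementations, and cites the decidability, in \twoexptime, of distributed synthesis with a single black-box process~\cite{conf/lics/FinkbeinerS05}; the lower bound is, exactly as in your proposal, inherited from LTL realizability~\cite{conf/popl/PnueliR89}. Your upper bound essentially inlines what that citation encapsulates: determinize $\zip(\varphi)$ into a parity automaton, form the product game over the fixed arena $\tsys^n \times \tsys^m$, and solve it, with positional determinacy yielding a finite-state $\tsys_\exists$. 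This is a legitimate, self-contained alternative, and your accounting is right: $2^{2^{O(\card{\varphi})}}$ positions, $2^{O(\card{\varphi})}$ colors, solving time $N^{O(c)}$, with the factor $\card{\T}^{n+m}$ absorbed, stays within \twoexptime{} in the input size. The ``informational correspondence'' you flag as the crux is indeed the only point where the game semantics and the definition of $\sigma$ must be matched, and it is unproblematic because both the arena and the determinized automaton are deterministic, so histories of $\forall$-moves determine positions, and the Moore-style causality is just a turn-order convention in the game. What the paper's route buys is brevity and reuse of a known result; what yours buys is a construction that can be read off directly (and, incidentally, is the same construction the cited result would perform for this degenerate architecture).

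The genuine issue is in your hardness reduction. Your parenthetical realization of the trace-universal system, ``states recording the last chosen output,'' needs $2^{\card{\mathit{Out}}}$ states, so the reduction is exponential in the LTL instance and would not transfer \twoexptime-hardness under polynomial-time reductions. The fix is immediate in the paper's model: since traces contain the inputs ($\ap = I \cup O$), take the \emph{one-state} transition system with input alphabet $\pow{\mathit{In} \cup \mathit{Out}}$ and empty output labeling; it generates all traces over $\mathit{In} \cup \mathit{Out}$, the reduction becomes polynomial, there is no ``fixed first letter'' problem at all (there are no state labels to fix), and the rest of your argument---$\sigma$ sees exactly the finite prefix of the universally chosen trace, matching the Moore convention up to a standard shift---goes through verbatim.
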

 \begin{proof}[Sketch]
  We reduce the strategy synthesis problem to the problem of synthesizing a distributed reactive system with a single black-box process.
  This problem is decidable~\cite{conf/lics/FinkbeinerS05} and can be solved in $\twoexptime$.
  The lower bound follows from the LTL realizability problem~\cite{conf/popl/PnueliR89}.\qed
 \end{proof}
 The decidability result implies that there is an upper bound on the size of $\tsys_\exists$ that is doubly exponential in $\varphi$.
 Thus, the bounded synthesis approach~\cite{journals/sttt/FinkbeinerS13} can be used to search for increasingly larger implementations, until a solution is found or the maximal bound is reached, yielding an efficient decision procedure for the strategy synthesis problem.
 In the following, we describe this approach in detail.

%-------------------------------------------------------------------------------
\subsubsection{Bounded Synthesis of Strategies.}
%-------------------------------------------------------------------------------

We transform the synthesis problem into an SMT constraint satisfaction problem, where we leave the representation of strategies uninterpreted and challenge the solver to provide an interpretation.
Given a $\hyperltl$ formula $\forall^n \exists^m \varphi$ where $\varphi$ is quantifier-free, the model checking is based on the product of the $n$-fold self composition of the transition system $\tsys$, the $m$-fold self-composition of $\tsys$ where the strategy $\tsys_\exists$ controls the inputs, and the universal co-B\"uchi automaton $\ucw_\varphi$ representing the language $\lang(\varphi)$ of $\varphi$.

For a quantifier-free HyperLTL formula $\varphi$, we construct the  universal co-B\"uchi automaton $\ucw_\varphi$ such that $\lang(\ucw_\varphi)$ is the set of words $w$ such that $\unzip(w) \models \varphi$, i.e., the tuple of traces satisfies $\varphi$. 
We get this automaton by dualizing the non-deterministic B\"uchi automaton for $\neg\psi$~\cite{conf/post/ClarksonFKMRS14}, i.e., changing the branching from non-deterministic to universal and the acceptance condition from B\"uchi to co-B\"uchi.
Hence, $\tsys$ satisfies a universal $\hyperltl$ formula $\forall \pi_1 \dots \forall \pi_n \ldot \varphi$ if the traces generated by the self-composition $\tsys^n$ are a subset of $\lang(\ucw_\varphi)$.

In more detail, the algorithm searches for a transition system $\tsys_\exists = \tuple{X,x_0,\mu,l_\exists}$ such that the run graph of $\tsys^n$, $\tsys^m \ ||\ \tsys_\exists$, and $\ucw_\varphi$, written $\tsys^n \times (\tsys^m \ ||\ \tsys_\exists) \times \ucw_\varphi$, is accepting.
Formally, given a $\Gamma$-labeled $\Upsilon$-transition system $\tsys = \tuple{\T,\st_0,\tau,l}$ and a universal co-B\"uchi automaton $\ucw_\varphi = \tuple{Q,q_0,\delta,F}$, where $\delta \colon Q \times \Upsilon^{n+m} \times \Gamma^{n+m} \to \pow{Q}$, the run graph $\tsys^n \times (\tsys^m \ ||\ \tsys_\exists) \times \ucw_\varphi$ is the directed graph $(V,E)$, with the set of vertices $V = \T^n \times \T^m \times X \times Q$, initial vertex $v_\mathit{init} = ((\st_0,\dots,\st_0),(\st_0,\dots,\st_0),x_0,q_0)$ and the edge relation $E \subseteq V \times V$ satisfying $((\vec{\st_n},\vec{\st_m},x,q),(\vec{\st_n'},\vec{\st_m'},x',q')) \in E$ if, and only if
\begin{align*}
  & \exists \vec{\upsilon} \in \Upsilon^n \ldot \enspace
  \left( \vec{\st_n} \xrightarrow[\tau_n]{\vec{\upsilon}} \vec{\st_n'} \right) \land
  \left( \vec{\st_m} \xrightarrow[\tau_m]{l_\exists(x)} \vec{\st_m'} \right) \land
  \left( x \xrightarrow[\mu]{\vec{\upsilon}} x' \right) \\
  & \quad {} \quad {} \quad {} \quad {} \quad {} 
  \land
  q' \in \delta(q, \vec{\upsilon} \cdot l_\exists(x), l_n(\vec{\st_n}) \cdot l_m(\vec{\st_m}) ).
\end{align*}

\begin{theorem}  \label{thm:soundness-strategy-synthesis}
  Given $\tsys$, $\tsys_\exists$, and a $\hyperltl$ formula $\forall^n\exists^m \varphi$ where $\varphi$ is quantifier-free.
  Let $\ucw_\varphi$ be the universal co-B\"uchi automaton for $\varphi$. 
  If the run graph $\tsys^n \times (\tsys^m \ ||\ \tsys_\exists) \times \ucw_\varphi$ is accepting, then $\tsys \models \forall^n\exists^m \varphi$.
\end{theorem}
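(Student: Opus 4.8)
The plan is to fix an arbitrary witness for the universal prefix, let the strategy $\tsys_\exists$ supply the existential witnesses, and then read off from the accepting run graph that $\varphi$ holds on the combined tuple. Concretely, I would fix an arbitrary input sequence $\vec{\upsilon} = \vec{\upsilon}_0 \vec{\upsilon}_1 \cdots \in (\Upsilon^n)^\omega$; by determinism of the self-composition $\tsys^n$ it induces a unique state sequence and hence traces $\pi_1, \ldots, \pi_n \in \mathit{traces}(\tsys)$, an arbitrary instantiation of the $\forall^n$ prefix. Feeding $\vec{\upsilon}$ into $\tsys_\exists$ yields a unique state sequence $x_0 x_1 \cdots$ and the induced labels $l_\exists(x_0) l_\exists(x_1) \cdots \in (\Upsilon^m)^\omega$, which drive the $m$ copies in $\tsys^m \parallel \tsys_\exists$ and produce traces $\pi_1', \ldots, \pi_m' \in \mathit{traces}(\tsys)$; these are the candidate witnesses for $\exists^m$.

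The key step is to locate the behaviour of $\ucw_\varphi$ inside the product. I would argue that, once $\vec{\upsilon}$ is fixed, the first three components of a run-graph vertex $(\vec{\st_n}, \vec{\st_m}, x, q)$ — the $\tsys^n$-component $\vec{\st_n}$, the $\tsys^m$-component $\vec{\st_m}$, and the strategy state $x$ — evolve deterministically along $\vec{\upsilon}$, so the only branching permitted by the edge relation comes from the clause $q' \in \delta(q, \vec{\upsilon} \cdot l_\exists(x), l_n(\vec{\st_n}) \cdot l_m(\vec{\st_m}))$. Hence the infinite paths of the run graph that use exactly the inputs $\vec{\upsilon}_0 \vec{\upsilon}_1 \cdots$ project, via their $q$-components, bijectively onto the branches of the run tree of $\ucw_\varphi$ on the word $w = \zip(\pi_1, \ldots, \pi_n, \pi_1', \ldots, \pi_m')$: at position $i$ the automaton reads the input letter $\vec{\upsilon}_i \cdot l_\exists(x_i)$ together with the label $l_n(\vec{\st_n}) \cdot l_m(\vec{\st_m})$, which is exactly the $i$-th letter of $w$.

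It then remains to transfer acceptance. Since the whole run graph is accepting, every infinite path visits rejecting ($F$) states only finitely often, so in particular every path using $\vec{\upsilon}$ does; by the bijection these are precisely the branches of the run tree of $\ucw_\varphi$ on $w$, so that run is accepting and $w \in \lang(\ucw_\varphi)$. By the defining property of $\ucw_\varphi$ this means $\unzip(w) \models \varphi$, i.e. $\pi_1, \ldots, \pi_n, \pi_1', \ldots, \pi_m'$ satisfy $\varphi$. As each $\pi_j'$ is a genuine trace of $\tsys$ they witness the existential quantifiers, and since $\vec{\upsilon}$ (hence $\pi_1, \ldots, \pi_n$) was arbitrary we obtain $\tsys \models \forall^n \exists^m \varphi$.

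The step I expect to be the main obstacle is making the correspondence of the second paragraph watertight: I must show that the universal branching of $\ucw_\varphi$ is reflected \emph{completely} and \emph{only} by the $q$-component, so that ``all paths of the run graph accept'' coincides exactly with ``the run tree on $w$ accepts''. This requires verifying that determinism of $\tsys$, $\tsys^n$, $\tsys^m$ and $\tsys_\exists$ collapses every non-automaton branch once $\vec{\upsilon}$ is fixed, and that the existential choice of $\vec{\upsilon}$ inside the edge relation introduces no spurious path that could either hide a rejecting automaton branch or manufacture one absent from the genuine run tree.
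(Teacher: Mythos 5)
Your proposal is correct and takes essentially the same approach as the paper: the paper's own proof is just a one-line appeal to Theorem~\ref{theo:stratex} (strategic choice substitutes for existential choice) together with the fact that $\ucw_\varphi$ represents $\lang(\varphi)$, and your argument simply inlines those two steps---the strategy-generated inputs serve as the existential witnesses, and run-graph acceptance transfers to acceptance of every zipped word. The only minor overstatement is the claimed bijection between run-graph paths and run-tree branches; for soundness you only need that every branch of the run of $\ucw_\varphi$ on the zipped word is realized by some run-graph path consistent with the fixed input sequence, which holds directly by the construction of the edge relation.
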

\begin{proof}
  Follows from Theorem~\ref{theo:stratex} and the fact that $\ucw_\varphi$ represents $\lang(\varphi)$. \qed
\end{proof}

The acceptance of a run graph is witnessed by an annotation $\lambda \colon V \rightarrow \nat \cup \set{\bot}$ which is a function mapping every reachable vertex $v \in V$ in the run graph to a natural number $\lambda(v)$, i.e., $\lambda(v) \neq \bot$.
Intuitively, $\lambda(v)$ returns the number of visits to rejecting states on any path from the initial vertex $v_\mathit{init}$ to $v$.
If we can bound this number for every reachable vertex, the annotation is \emph{valid} and the run graph is accepting.
Formally, an annotation $\lambda$ is valid, if
(1) the initial state is reachable ($\lambda(v_\mathit{init}) \neq \bot$) and
(2) for every $(v,v') \in E$ with $\lambda(v) \neq \bot$ it holds that $\lambda(v') \neq \bot$ and $\lambda(v) \vartriangleright \lambda(v')$ where $\trianglerighteq$ is $>$ if $v'$ is rejecting and $\geq$ otherwise.
Such an annotation exists if, and only if, the run graph is accepting~\cite{journals/sttt/FinkbeinerS13}.

We encode the search for $\tsys_\exists$ and the annotation $\lambda$ as an SMT constraint system.
Therefore, we use uninterpreted function symbols to encode $\tsys_\exists$ and $\lambda$.
A transition system $\tsys$ is represented in the constraint system by two functions, the transition function $\tau \colon \T \times \Upsilon \rightarrow \T$ and the labeling function $l \colon \T \rightarrow \Gamma$.
The annotation is split into two parts, a reachability constraint $\lambda^\bool \colon V \rightarrow \bool$ indicating whether a state in the run graph is reachable and a counter $\lambda^\# \colon V \rightarrow \nat$ that maps every reachable vertex $v$ to the maximal number of rejecting states $\lambda^\#(v)$ visited by any path from the initial vertex to $v$.
The resulting constraint asserts that there is a transition system $\tsys_\exists$ with an accepting run graph.
Note, that the functions representing the system $\tsys$ ($\tau \colon \T \times \Upsilon \rightarrow \T$ and $l \colon \T \to \Gamma$) are given, that is, they are interpreted. 
\begin{align*}
  &\exists \lambda^\bool \colon \T^n \times \T^m \times X \times Q \rightarrow \bool \ldot
  \exists \lambda^\nat \colon \T^n \times \T^m \times X \times Q \rightarrow \nat \ldot \\
  & \exists \mu \colon X \times \Upsilon^n \to X \ldot
  \exists l_\exists \colon X \to \Upsilon^m \\
  &\forall \vec{\upsilon} \in \Upsilon^n \ldot
  \forall \vec{\st_n},\vec{\st_n'} \in \T^{n} \ldot
  \forall \vec{\st_m},\vec{\st_m'} \in \T^{m} \ldot
  \forall q,q' \in Q \ldot
  \forall x,x' \in X \ldot \\
  & \lambda^\bool((\st_0,\dots,\st_0), (\st_0,\dots,\st_0), x_0, q_0) \land  {} \\
  &\Big( \lambda^\bool(\vec{\st_n},\vec{\st_m},x,q) \land q' \in \delta(q, (\vec{\upsilon} \cdot l_\exists(x)), (l \circ (\vec{\st_n} \cdot \vec{\st_m}))) \land x' = \mu(x, \vec{\upsilon}) \\
  & \quad {} \land
    \vec{\st_n'} = \tau_n(\vec{\st_n}, \vec{\upsilon})
    \land
    \vec{\st_m'} = \tau_m(\vec{\st_m}, l_\exists(x))
    \Big) \\
  &\Rightarrow \lambda^\bool(\vec{\st_n'}, \vec{\st_m'}, x', q') \land \lambda^\nat(\vec{\st_n}, \vec{\st_m}, x,q) \trianglerighteq \lambda^\nat(\vec{\st_n'}, \vec{\st_m'}, x',q')
\end{align*}
where $\trianglerighteq$ is $>$ if $q' \in F$ and $\geq$ otherwise.
The \emph{bounded synthesis algorithm} increases the bound of the strategy $\tsys_\exists$ until either the constraints system becomes satisfiable, or a given upper bound is reached.
In the case the constraint system is satisfiable, we can extract interpretations for the functions $\mu$ and $l_\exists$ using a solver that is able to produce models.
These functions then represent the synthesized transition system $\tsys_\exists$.

\begin{corollary}
  Given $\tsys$ and a $\hyperltl$ formula $\forall^*\exists^* \varphi$ where $\varphi$ is quantifier-free.
  If the constraint system is satisfiable for some bound on the size of $\tsys_\exists$ then $\tsys \models \forall^*\exists^* \varphi$.
\end{corollary}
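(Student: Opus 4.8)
The plan is to chain the corollary back to the soundness Theorem~\ref{thm:soundness-strategy-synthesis} through the annotation characterization of accepting run graphs. Concretely, I would start by assuming the constraint system is satisfiable for some fixed bound $b$ on $\card{X}$. A satisfying assignment interprets the four uninterpreted symbols: $\mu \colon X \times \Upsilon^n \to X$ and $l_\exists \colon X \to \Upsilon^m$ (which together define a concrete finite strategy transition system $\tsys_\exists = \tuple{X,x_0,\mu,l_\exists}$ with $\card{X}=b$), together with the two annotation components $\lambda^\bool$ and $\lambda^\nat$. From these I would reassemble a single annotation $\lambda \colon V \to \nat \cup \set{\bot}$ on the run graph $\tsys^n \times (\tsys^m \ ||\ \tsys_\exists) \times \ucw_\varphi = (V,E)$ by setting $\lambda(v) = \lambda^\nat(v)$ whenever $\lambda^\bool(v)$ holds and $\lambda(v) = \bot$ otherwise.

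The core step is to verify that this reassembled $\lambda$ is a \emph{valid} annotation in the sense defined above. Condition~(1), that $\lambda(v_\mathit{init}) \neq \bot$, is immediate from the first conjunct of the constraint, which asserts $\lambda^\bool((\st_0,\dots,\st_0),(\st_0,\dots,\st_0),x_0,q_0)$. For condition~(2), I would take an arbitrary edge $(v,v') \in E$ with $\lambda(v) \neq \bot$ and unfold the definition of $E$: there is some $\vec\upsilon \in \Upsilon^n$ witnessing $\vec{\st_n} \xrightarrow[\tau_n]{\vec\upsilon} \vec{\st_n'}$, $\vec{\st_m} \xrightarrow[\tau_m]{l_\exists(x)} \vec{\st_m'}$, $x \xrightarrow[\mu]{\vec\upsilon} x'$, and $q' \in \delta(q, \vec\upsilon \cdot l_\exists(x), l_n(\vec{\st_n}) \cdot l_m(\vec{\st_m}))$. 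Instantiating the universally quantified body of the constraint with exactly this $\vec\upsilon$ and these states, the antecedent of the encoded implication is satisfied ($\lambda^\bool(v)$ holds because $\lambda(v)\neq\bot$, and the four transition and automaton conjuncts hold because $\vec\upsilon$ witnesses the edge). Hence the consequent yields $\lambda^\bool(v')$, so $\lambda(v')\neq\bot$, and $\lambda^\nat(v) \trianglerighteq \lambda^\nat(v')$ with $\trianglerighteq$ strict exactly when $q' \in F$, i.e., when $v'$ is rejecting. This is precisely condition~(2).

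Having established that $\lambda$ is valid, I would invoke the annotation characterization~\cite{journals/sttt/FinkbeinerS13} (a valid annotation exists if, and only if, the run graph is accepting) to conclude that $\tsys^n \times (\tsys^m \ ||\ \tsys_\exists) \times \ucw_\varphi$ is accepting. Applying Theorem~\ref{thm:soundness-strategy-synthesis} to the synthesized $\tsys_\exists$ then gives $\tsys \models \forall^n \exists^m \varphi$, which is the claim since $\forall^*\exists^*$ abbreviates this general shape. I do not expect any deep obstacle here: the proof is pure soundness bookkeeping. The only point demanding care is the mismatch between the \emph{existential} quantification over $\vec\upsilon$ hidden in the edge relation $E$ and the \emph{universal} quantification over $\vec\upsilon$ in the SMT body, which is resolved by instantiating the constraint at the specific witness of each edge, together with the split of the single annotation $\lambda$ into the reachability predicate $\lambda^\bool$ and the counter $\lambda^\nat$, which must be glued back together consistently. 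Note that this direction is one-way (soundness only); the converse need not hold because of the finite-lookahead limitation already discussed for Theorem~\ref{theo:stratex}.
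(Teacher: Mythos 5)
Your proposal is correct and follows the same route as the paper: the paper's own proof is a one-line appeal to Theorem~\ref{thm:soundness-strategy-synthesis}, and your argument simply unfolds the bookkeeping that makes this appeal work (a satisfying assignment yields $\tsys_\exists$ together with a valid annotation $\lambda$ reassembled from $\lambda^\bool$ and $\lambda^\nat$, the annotation characterization of~\cite{journals/sttt/FinkbeinerS13} gives acceptance of the run graph, and the theorem then gives $\tsys \models \forall^n\exists^m \varphi$). Your care about instantiating the universally quantified constraint at the existential witness $\vec\upsilon$ of each edge, and your remark that the implication is soundness-only, are both accurate and consistent with the paper.
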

\begin{proof}
  Follows immediately by Theorem~\ref{thm:soundness-strategy-synthesis}.\qed
\end{proof}
As the decision problem is decidable, we know that there is an upper bound on the size of a realizing $\tsys_\exists$ and, thus, the bounded synthesis approach is a decision procedure for the strategy realizability problem.
\begin{corollary}
  The bounded synthesis algorithm decides the strategy realizability problem for $\forall^* \exists^*\, \hyperltl$.
\end{corollary}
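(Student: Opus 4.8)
The plan is to establish the two properties that together make the bounded synthesis algorithm a decision procedure: \emph{soundness}, that every positive answer is correct, and \emph{completeness together with termination}, that the search always halts and reports a positive answer exactly when the instance is realizable. Soundness is already in hand: by the previous corollary, whenever the constraint system is satisfiable at some bound, the extracted $\tsys_\exists$ witnesses $\tsys \models \forall^*\exists^*\varphi$, so a reported strategy genuinely realizes the specification. It therefore remains to argue the converse direction and that the search is bounded.

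First I would fix, via Theorem~\ref{thm:strategy-synthesis-decidable}, the doubly-exponential upper bound $b^*$ on the number of states of a realizing strategy $\tsys_\exists$. The key completeness step is then the following: if the instance is realizable, there exists a finite-state $\tsys_\exists$ with $\card{X} \le b^*$ whose run graph $\tsys^n \times (\tsys^m\ ||\ \tsys_\exists) \times \ucw_\varphi$ is accepting---the equivalence between realizing the universal formula and acceptance of the run graph holding because $\ucw_\varphi$ represents exactly $\lang(\varphi)$, as used in Theorem~\ref{thm:soundness-strategy-synthesis}. For such a $\tsys_\exists$, the characterization of acceptance by valid annotations from~\cite{journals/sttt/FinkbeinerS13} yields an annotation satisfying the reachability and counter conditions; instantiating the existentially quantified symbols $\mu$, $l_\exists$, $\lambda^\bool$, and $\lambda^\nat$ of the SMT constraint system accordingly shows that the constraint system is satisfiable at bound $b^*$. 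Hence, once the algorithm reaches the bound $b^*$, the solver finds a model.

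Termination and overall correctness follow by assembling these pieces. Since the algorithm increases the bound only up to the finite value $b^*$, it halts after finitely many solver calls in every case. If it reports realizable, soundness (the previous corollary) guarantees correctness; if it exhausts the bound $b^*$ without a satisfiable constraint, then by the contrapositive of the completeness step no realizing strategy of size at most $b^*$ exists, and because $b^*$ bounds the size of \emph{every} realizing strategy, the instance is unrealizable. The main obstacle is the completeness step: the prior corollary supplies only the easy ``satisfiable $\Rightarrow$ realizable'' direction, so one must combine two imported ingredients---the finite size bound of Theorem~\ref{thm:strategy-synthesis-decidable} and the annotation characterization of~\cite{journals/sttt/FinkbeinerS13}---to turn an abstract realizing strategy into a concrete model of the constraint system at a known bound.
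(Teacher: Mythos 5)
Your proof is correct and follows essentially the same route as the paper: the paper's one-line argument also derives the doubly-exponential size bound on $\tsys_\exists$ from Theorem~\ref{thm:strategy-synthesis-decidable}, takes soundness from the preceding corollary, and implicitly relies on the annotation characterization of~\cite{journals/sttt/FinkbeinerS13} for completeness of each bounded check. You merely spell out explicitly the completeness and termination details that the paper leaves implicit.
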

\begin{proof}
  The existence of such an upper bound follows from Theorem~\ref{thm:strategy-synthesis-decidable}.\qed
\end{proof}

%-------------------------------------------------------------------------------
\subsubsection{Approximating Prophecy.}
\label{sec:lookahead-synthesis}
%-------------------------------------------------------------------------------

We introduce a new parameter to the strategy synthesis problem to approximate the information about the future that can be captured using prophecy variables. 
This bound represents a constant \emph{lookahead} into future choices made by the environment. 
In other words, for a given $k \geq 0$, the strategy $\tsys_\exists$ is allowed to depend on choices of the $\forall$-player in the next $k$ steps. While constant lookahead is only an approximation of infinite clairvoyance, it suffices for many practical situations as shown by prior case studies~\cite{conf/cav/FinkbeinerRS15,conf/esop/DArgenioBBFH17}.

We present a solution to synthesizing transition systems with constant lookahead for $k \geq 0$ using bounded synthesis.
To simplify the presentation, we present the stand-alone problem with respect to a specification given as a universal co-B\"uchi automaton.
The integration into the constraint system for the $\forall^* \exists^*\, \hyperltl$ synthesis as presented in the previous section is then straightforward.
First, we present an extension to the transition system model that incorporates the notion of constant lookahead.
The idea of this extension is to replace the initial state $\st_0$ by a function $\mathit{init} \colon \Upsilon^k \to \T$ that maps input sequences of length $k$ to some state.
Thus, the transition system observes the first $k$ inputs, chooses some initial state based on those inputs, and then progresses with the same pace as the input sequence.
Next, we define the run graph of such a system $\tsys_k = \tuple{S,\mathit{init},\tau,l}$ and an automaton $\ucw = \tuple{Q,q_0,\delta,F}$, where $\delta \colon Q \times \Upsilon \times \Gamma \to Q$, as the directed graph $(V,E)$ with the set of vertices $V = \T \times Q \times \Upsilon^k$, the initial vertices $(\st,q_0,\vec{\upsilon}) \in V$ such that $\st = \mathit{init}(\vec{\upsilon})$ for every $\vec{\upsilon} \in \Upsilon^k$, and the edge relation $E \subseteq V \times V$ satisfying $( (\st,q,\upsilon_1 \upsilon_2 \cdots \upsilon_k), (\st',q',\upsilon'_1 \upsilon'_2 \cdots \upsilon'_k) ) \in E$ if, and only if
\begin{align*}
  \exists \upsilon_{k+1} \in \Upsilon \ldot
  \st \xrightarrow{\upsilon_{k+1}} \st' \land q' \in \delta(q, \upsilon_1, l(\st)) \land \bigwedge_{1 \leq i \leq k} \upsilon'_i = \upsilon_{i+1} .
\end{align*}

\begin{lemma}
  Given a universal co-B\"uchi automaton $\ucw$ and a $k$-lookahead transition system $\tsys_k$.
  $\tsys_k \models \ucw$ if, and only if, the run graph $\tsys_k \times \ucw$ is accepting.
\end{lemma}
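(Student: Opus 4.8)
The plan is to set up a tight correspondence between the infinite paths of the run graph $\tsys_k \times \ucw$ and the branches of the universal co-B\"uchi runs of $\ucw$ on the traces of $\tsys_k$, in such a way that corresponding objects visit the rejecting set $F$ at exactly the same positions. Recall that $\tsys_k \models \ucw$ means that \emph{every} trace of $\tsys_k$ lies in $\lang(\ucw)$, that a trace lies in $\lang(\ucw)$ exactly when \emph{every} branch of its universal run visits $F$ only finitely often, and that the run graph is accepting exactly when \emph{every} infinite path visits $F$ only finitely often. Thus the lemma follows immediately once this position-preserving bijection between run-graph paths and run branches is in place.

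First I would unwind the window bookkeeping along an infinite path $v_0 v_1 v_2 \cdots$ of the run graph, where $v_i = (\st_i, q_i, w_i)$ with $w_i \in \Upsilon^k$. Define the input letter read at step $i$ to be $\upsilon_i := w_i[1]$. Using the shift constraint $w_{i+1}[j] = w_i[j+1]$ from the edge relation, a telescoping argument gives the identity $w_i[j] = \upsilon_{i+j-1}$, i.e. $w_i = \upsilon_i \upsilon_{i+1} \cdots \upsilon_{i+k-1}$, so the windows along an infinite path encode one well-defined input word $\upsilon = \upsilon_0 \upsilon_1 \cdots \in \Upsilon^\omega$. The state recursion then reads $\st_0 = \mathit{init}(\upsilon_0 \cdots \upsilon_{k-1})$ and $\st_{i+1} = \tau(\st_i, w_{i+1}[k]) = \tau(\st_i, \upsilon_{i+k})$, which is precisely the state of $\tsys_k$ at time $i$ on input $\upsilon$; hence the pairs $(\upsilon_i, l(\st_i))$ are exactly the letters of the trace $\rho_\upsilon$ generated by $\tsys_k$ on $\upsilon$. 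Finally, $q_{i+1} \in \delta(q_i, \upsilon_i, l(\st_i))$ says that $q_0 q_1 q_2 \cdots$ is a branch of the run of $\ucw$ on $\rho_\upsilon$. Conversely, given any input word $\upsilon$ with induced trace $\rho_\upsilon$ and states $\st_i$, and any branch $q_0 q_1 \cdots$ of the $\ucw$-run on $\rho_\upsilon$, setting $v_i := (\st_i, q_i, \upsilon_i \cdots \upsilon_{i+k-1})$ yields a legal infinite run-graph path, the guessed letter at step $i$ being $\upsilon_{i+k}$. This establishes the bijection, and by construction the $q$-component, and therefore the sequence of visits to $F$, is identical on both sides.

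Given the bijection, both directions are routine. For the "if" direction, assume the run graph is accepting and fix any trace $\rho_\upsilon$ of $\tsys_k$; every branch of the $\ucw$-run on $\rho_\upsilon$ maps to an infinite run-graph path, which visits $F$ finitely often by assumption, so every branch does, whence $\rho_\upsilon \in \lang(\ucw)$ and $\tsys_k \models \ucw$. For the "only if" direction, assume $\tsys_k \models \ucw$ and fix any infinite run-graph path; it maps to an input word $\upsilon$ together with a branch of the $\ucw$-run on $\rho_\upsilon \in \lang(\ucw)$, which therefore visits $F$ finitely often, so the original path does too, and the run graph is accepting.

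I expect the only delicate point to be the window/state bookkeeping of the middle step: one must check that the $k$-step delay is threaded consistently, namely that the window genuinely encodes the next $k$ inputs, that the transition $\st_i \to \st_{i+1}$ fires on the input $\upsilon_{i+k}$ entering the window rather than on the input $\upsilon_i$ the automaton is currently reading, and that $\st_i$ coincides with the intended $k$-lookahead semantics in which the output at time $i$ may depend on inputs up to time $i+k-1$. Once this shift is pinned down by the telescoping identity $w_i[j] = \upsilon_{i+j-1}$, everything else reduces to matching the universal branching of $\ucw$ with the guessed-input branching of the run graph, which is immediate.
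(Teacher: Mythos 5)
The paper states this lemma without any proof (it is presented as an unproven auxiliary fact, implicitly justified by analogy with the standard run-graph characterization of bounded synthesis), so there is nothing to compare against; what matters is whether your argument stands on its own, and it does. Your correspondence is exactly right: the shift constraint in the edge relation forces $w_i = \upsilon_i \upsilon_{i+1} \cdots \upsilon_{i+k-1}$, the system component evolves as $\st_0 = \mathit{init}(\upsilon_0 \cdots \upsilon_{k-1})$ and $\st_{i+1} = \tau(\st_i, \upsilon_{i+k})$, which is precisely the intended $k$-lookahead semantics, and the automaton component reads $(\upsilon_i, l(\st_i))$, so infinite run-graph paths from initial vertices are in position-preserving bijection with pairs consisting of an input word and a branch of the universal run of $\ucw$ on the induced trace; since the $Q$-component (and hence the visits to $F$) is preserved, both directions of the equivalence follow. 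The one point worth being explicit about, which you correctly flag as the delicate step, is that the transition fires on the letter entering the window ($\upsilon_{i+k}$) while the automaton reads the letter leaving it ($\upsilon_i$) — your telescoping identity pins this down, and your proof is complete.
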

Finally, synthesis amounts to solving the following constraint system:
\begin{align*}
  &\exists \lambda^\bool \colon \T \times Q \times \Upsilon^k \to \bool \ldot
  \exists \lambda^\nat \colon \T \times Q \times \Upsilon^k \to \nat \ldot \\
  & \exists \mathit{init} \colon \Upsilon^k \to \T \ldot
  \exists \tau \colon \T \times \Upsilon \rightarrow \T \ldot
  \exists l \colon \T \to \Gamma \ldot\\
  & (\forall \vec{\upsilon} \in \Upsilon^k \ldot \lambda^\bool(\mathit{init}(\vec{\upsilon}), q_0, \vec{\upsilon})) \land {}\\
  &\forall \upsilon_1\upsilon_2\cdots\upsilon_{k+1} \in \Upsilon^{k+1} \ldot
  \forall \st,\st' \in \T \ldot
  \forall q,q' \in Q \ldot  \\
  &\left( \lambda^\bool(\st,q,\upsilon_1\cdots\upsilon_k) \land \st' = \tau(\st, \upsilon_{k+1}) \land q' \in \delta(q, \upsilon_1, l(\st)) \right) \\
  &\Rightarrow \lambda^\bool(\st', q',\upsilon_2 \cdots \upsilon_{k+1}) \land \lambda^\nat(\st,q,\upsilon_1 \cdots \upsilon_k) \trianglerighteq \lambda^\nat(\st',q',\upsilon_2 \cdots \upsilon_{k+1})
\end{align*}

\begin{corollary}
  Given some $k \geq 0$, if the constraint system is satisfiable for some bound on the size of $\tsys_k$ then $\tsys_k \models \ucw$.
\end{corollary}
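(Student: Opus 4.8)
The plan is to show that any model of the constraint system directly exhibits a valid annotation of the run graph $\tsys_k \times \ucw$, and then to invoke the preceding Lemma to conclude $\tsys_k \models \ucw$. This follows the same pattern as the soundness argument for the bounded synthesis encoding: the two functions $\lambda^\bool$ and $\lambda^\nat$ jointly represent a single annotation $\lambda \colon V \to \nat \cup \set{\bot}$, where $\lambda(v) = \lambda^\nat(v)$ whenever $\lambda^\bool(v)$ holds and $\lambda(v) = \bot$ otherwise, with $V = \T \times Q \times \Upsilon^k$ the vertex set of the run graph defined above.

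First I would fix a satisfying assignment. This interprets the uninterpreted symbols $\mathit{init}$, $\tau$, and $l$, which together define a concrete $k$-lookahead transition system $\tsys_k = \tuple{\T, \mathit{init}, \tau, l}$, and it interprets $\lambda^\bool$ and $\lambda^\nat$ over $V$. Next I would verify the two validity conditions for annotations of $\tsys_k \times \ucw$. The first conjunct $\forall \vec\upsilon \in \Upsilon^k \ldot \lambda^\bool(\mathit{init}(\vec\upsilon), q_0, \vec\upsilon)$ marks \emph{every} initial vertex $(\mathit{init}(\vec\upsilon), q_0, \vec\upsilon)$ as annotated, which is the natural adaptation of condition~(1) to the lookahead setting, where there is one initial vertex per $\vec\upsilon \in \Upsilon^k$ rather than a single $v_\mathit{init}$. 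For condition~(2), I would observe that the universally quantified implication ranges over exactly the edges of $E$: given $\upsilon_1 \cdots \upsilon_{k+1}$, the premise $\st' = \tau(\st,\upsilon_{k+1}) \land q' \in \delta(q,\upsilon_1, l(\st))$ together with the implicit shift $\upsilon'_i = \upsilon_{i+1}$ is precisely the definition of the edge from $(\st,q,\upsilon_1\cdots\upsilon_k)$ to $(\st',q',\upsilon_2\cdots\upsilon_{k+1})$, witnessed by the lookahead symbol $\upsilon_{k+1}$. Hence whenever the source vertex is annotated, the constraint forces the target to be annotated and enforces $\lambda^\nat(v) \trianglerighteq \lambda^\nat(v')$, with the strict inequality holding exactly when $q' \in F$, matching the definition of $\trianglerighteq$ required for validity.

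The only place where care is needed, and thus the main bookkeeping step, is the alignment of the existential $\upsilon_{k+1}$ appearing in the edge relation of the run graph with the universal quantification over $\upsilon_1\cdots\upsilon_{k+1}$ in the constraint. Because the constraint quantifies universally over every lookahead extension $\upsilon_{k+1}$, every edge of $E$ (each of which is witnessed by some such symbol) is covered, so no edge escapes the annotation requirement; conversely, the constraint never imposes a condition for a non-edge. With both validity conditions established, the annotation characterization of Finkbeiner and Schewe~\cite{journals/sttt/FinkbeinerS13} guarantees that the run graph $\tsys_k \times \ucw$ is accepting, and the preceding Lemma then yields $\tsys_k \models \ucw$, completing the argument.
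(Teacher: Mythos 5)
Your proof is correct and takes essentially the approach the paper intends: the paper states this corollary without an explicit proof, treating it as an immediate consequence of the fact that the constraint system encodes exactly the validity conditions of an annotation of the run graph $\tsys_k \times \ucw$ (in the sense of Finkbeiner and Schewe), combined with the preceding Lemma. Your additional bookkeeping --- the adaptation of condition~(1) to the multiple initial vertices $(\mathit{init}(\vec\upsilon), q_0, \vec\upsilon)$ and the alignment of the existentially quantified $\upsilon_{k+1}$ in the edge relation with the universal quantification in the constraint --- fills in precisely the details the paper leaves implicit.
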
 

%-------------------------------------------------------------------------------
\section{Synthesis with Quantifier Alternations}
\label{sec:fullsynthesis}
%-------------------------------------------------------------------------------

We now build on the introduced techniques to solve the \emph{synthesis} problem for $\hyperltl$ with quantifier alternation, that is, we search for implementations that satisfy the given properties.
%
%-------------------------------------------------------------------------------
\label{sec:ea-synthesis}
%-------------------------------------------------------------------------------
%
In previous work~\cite{conf/cav/FinkbeinerHLST18}, the synthesis problem for $\exists^*\forall^*\, \hyperltl$  was solved by a reduction to the distributed synthesis problem.
We present an alternative synthesis procedure that (1) introduces the necessary concepts for the synthesis of the $\forall^* \exists^*$ fragment and that (2) strictly decomposes the choice of the existential trace quantifier from the implementation.

Fix a formula of the form $\exists^m \forall^n \varphi$. 
We again reduce the verification problem to the problem of determining whether a run graph is accepting.
As the existential quantifiers do not depend on the universal ones, there is no future dependency and thus no need for prophecy variables or bounded lookahead.
Formally, $\tsys_\exists$ is a tuple $\tuple{X,x_0,\mu,l_\exists}$ such that 
$X$ is a set of states,
$x_0 \in X$ is the designated initial state,
$\mu \colon X \to X$ is the transition function, and
$l_\exists \colon X \to \Upsilon^m$ is the labeling function.
$\tsys_\exists$ produces infinite sequences of $(\Upsilon^m)^\omega$, without having any knowledge about the behavior of the universally quantified traces.
The run graph is then $(\tsys^m \ ||\ \tsys_\exists) \times \tsys^n \times \ucw_\varphi$.
The constraint system is built analogously to Sec.~\ref{sec:strategysynthesis}, with the difference that the representation of the system $\tsys$ is now also uninterpreted.
In the resulting SMT constraint system, we have two bounds, one for the size of the implementation $\tsys$ and one for the size of $\tsys_\exists$.

\begin{corollary}
  The bounded synthesis algorithm decides the realizability problem for $\exists^*\forall^1\, \hyperltl$ and is a semi-decision procedure for 
  $\exists^*\forall^{>1}\, \hyperltl$.
\end{corollary}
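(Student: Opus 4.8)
The plan is to lift the soundness argument of Theorem~\ref{thm:soundness-strategy-synthesis} from the model-checking setting, where $\tsys$ is given, to the full synthesis setting, where both $\tsys$ and $\tsys_\exists$ are synthesized, and then to read off (non-)termination from the known decidability boundary of $\exists^*\forall^*\,\hyperltl$ synthesis. First I would establish soundness at every bound: a satisfying assignment of the constraint system provides interpretations for $\tau$ and $l$ (a finite-state $\tsys$), for $\mu$ and $l_\exists$ (a witness system $\tsys_\exists$), and a valid annotation $\lambda$. By the annotation characterization of~\cite{journals/sttt/FinkbeinerS13}, validity of $\lambda$ means the run graph $(\tsys^m \ ||\ \tsys_\exists) \times \tsys^n \times \ucw_\varphi$ is accepting, hence $\mathit{traces}((\tsys^m \ ||\ \tsys_\exists) \times \tsys^n) \subseteq \lang(\ucw_\varphi)$, and the $\exists^*\forall^*$ direction of Theorem~\ref{theo:stratex} then yields $\tsys \models \exists^m\forall^n\varphi$. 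So whenever the algorithm returns a model, the formula is realized by the extracted finite-state implementation.

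Next I would prove the converse at a fixed bound, which already gives a semi-decision procedure for the whole $\exists^*\forall^*$ fragment. If a finite-state $\tsys$ of size $b$ realizes $\exists^m\forall^n\varphi$, then an existential witness exists; since the set of admissible witness tuples is $\omega$-regular, it contains an ultimately periodic one, which is produced by a finite-state $\tsys_\exists$ of some size $b'$. The associated run graph is accepting, so by~\cite{journals/sttt/FinkbeinerS13} a valid annotation exists and the constraint system is satisfiable at bound $(b,b')$. Thus the constraint is satisfiable at $(b,b')$ if, and only if, a realizing pair of those sizes exists, and enumerating bounds always terminates with a witness on realizable instances, independently of the number of universal quantifiers.

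Finally, to turn this into a decision procedure for $\exists^*\forall^1$ I would supply a \emph{computable} upper bound on the size of a realizing $\tsys$ (and of $\tsys_\exists$), so that the enumeration may stop and safely report unrealizability. Such a bound is exactly what the decidability of $\exists^*\forall^1\,\hyperltl$ synthesis~\cite{conf/cav/FinkbeinerHLST18} provides: for this fragment the automata-theoretic construction yields a finite-state implementation of at most doubly-exponential size whenever one exists, intuitively because a single universal trace keeps the induced distributed architecture free of the information forks that make synthesis undecidable~\cite{conf/lics/FinkbeinerS05}. For more than one universal quantifier the matching undecidability of~\cite{conf/cav/FinkbeinerHLST18} rules out any computable size bound, since enumeration up to such a bound would decide the problem; only the semi-decision procedure of the previous paragraph survives. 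I expect this last step to be the main obstacle: soundness and fixed-bound completeness are routine adaptations of Theorem~\ref{thm:soundness-strategy-synthesis} and the annotation machinery, whereas correctly locating the decidability frontier between one and several independent universal inputs, and extracting a concrete terminating bound on the $\forall^1$ side, is where the real content lies.
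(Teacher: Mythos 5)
Your proposal is correct and follows essentially the same route as the paper, which states this corollary without an explicit proof and implicitly relies on exactly the ingredients you spell out: soundness of the constraint system via the run-graph construction and the $\exists^*\forall^*$ direction of Theorem~\ref{theo:stratex}, completeness at each fixed bound via the annotation characterization of~\cite{journals/sttt/FinkbeinerS13} (together with the observation that a nonempty $\omega$-regular set of witness tuples contains an ultimately periodic, hence finite-state, member), and the decidability of $\exists^*\forall^1$ versus undecidability of $\exists^*\forall^{>1}$ realizability from~\cite{conf/cav/FinkbeinerHLST18}, the former supplying the computable size bound that turns bounded enumeration into a decision procedure. Your reconstruction is, if anything, more detailed than the paper's implicit argument; the only inessential liberties are the precise (doubly-exponential) size bound and the information-fork intuition, neither of which the argument actually depends on.
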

The synthesis problem for formulas in the $\forall^* \exists^*\, \hyperltl$ fragment uses the same reduction to a constraint system as the strategy synthesis in Sec.~\ref{sec:strategysynthesis}, with the only difference that the transition system $\tsys$ itself is uninterpreted.
In the resulting SMT constraint systems, we have three bounds, the size of the implementation $\tsys$, the size of the strategy $\tsys_\exists$, and the lookahead $k$.

\begin{corollary}
  Given a $\hyperltl$ formula $\forall^n\exists^m \varphi$ where $\varphi$ is quantifier-free.
  $\forall^n\exists^m \varphi$ is realizable if the SMT constraint system corresponding to the run graph $\tsys^n \times (\tsys^m \ ||\ \tsys_\exists) \times \ucw_\varphi$ is satisfiable for some bounds on $\tsys$, $\tsys_\exists$, and lookahead $k$.
\end{corollary}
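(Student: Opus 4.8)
The plan is to establish only the soundness direction stated — that satisfiability of the constraint system exhibits a concrete witnessing implementation — so no completeness argument is needed. First I would invoke the model-producing capability of the SMT solver: a satisfying assignment interprets every uninterpreted symbol. For the full synthesis problem these are the functions $\tau$ and $l$ defining the implementation $\tsys$ (which, unlike in Sec.~\ref{sec:strategysynthesis}, are now themselves uninterpreted), the functions $\mu$ and $l_\exists$ defining the existential strategy $\tsys_\exists$, and the two components $\lambda^\bool$ and $\lambda^\nat$ of the annotation. Extracting these yields a concrete transition system $\tsys$ together with a concrete $\tsys_\exists$, so it suffices to prove that this particular $\tsys$ satisfies $\forall^n \exists^m \varphi$, which immediately gives realizability.

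Next I would argue that the pair $(\lambda^\bool, \lambda^\nat)$ is a \emph{valid} annotation of the product run graph $\tsys^n \times (\tsys^m\ ||\ \tsys_\exists) \times \ucw_\varphi$ in the sense of Finkbeiner and Schewe~\cite{journals/sttt/FinkbeinerS13}. The body of the constraint system is precisely the conjunction of the two validity conditions: the first conjunct asserts reachability of the initial vertex, and the implication asserts that along every edge reachability is preserved and the counter decreases strictly ($\trianglerighteq$ is $>$) at rejecting automaton states and weakly otherwise. By the cited characterization, a valid annotation exists if, and only if, the run graph is accepting; hence the extracted annotation certifies acceptance. Applying Theorem~\ref{thm:soundness-strategy-synthesis} — whose hypothesis is merely that this run graph is accepting, and which is indifferent to whether $\tsys$ was given or synthesized — yields $\tsys \models \forall^n \exists^m \varphi$.

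The one genuinely new ingredient relative to the earlier corollaries is the lookahead parameter $k$, and this is where I expect the real work to lie. With lookahead, $\tsys_\exists$ chooses its output only after observing the next $k$ moves of the $\forall$-player, so the product is built over an augmented vertex set carrying a window in $\Upsilon^k$ and the annotation ranges over this larger domain; the Lemma on $k$-lookahead run graphs guarantees that validity of the annotation on the augmented graph still corresponds exactly to acceptance. The subtle point is soundness of the lookahead reduction itself: one must check that a strategy consulting a bounded window of future inputs still induces a legitimate witness for the existential quantifiers in the sense of Theorem~\ref{theo:stratex}. This holds because finite lookahead only grants the strategy strictly more information while it still commits, at each position, to an output in $\Upsilon^m$ that is fed into $\tsys^m$, so the produced sequence is a genuine trace of $\tsys$ and witnesses $\exists \pi_1' \ldots \exists \pi_m'$. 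Once this soundness is in place, the annotation argument above transfers verbatim and the corollary follows.
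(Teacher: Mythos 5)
Your proposal is correct and follows exactly the chain of reasoning the paper intends for this corollary (which it states without an explicit proof, as an immediate consequence of the preceding development): extract interpretations of $\tau$, $l$, $\mu$, $l_\exists$ and the annotation from the satisfying assignment, use the cited characterization that a valid annotation exists if and only if the run graph is accepting, and conclude via Theorem~\ref{thm:soundness-strategy-synthesis} (hence Theorem~\ref{theo:stratex}) that the extracted $\tsys$ satisfies $\forall^n\exists^m\varphi$, with the $k$-lookahead lemma covering the augmented run graph. Your added observation that lookahead strategies remain sound witnesses for the existential quantifiers---since the soundness direction of Theorem~\ref{theo:stratex} never needs causality of the strategy---is a point the paper leaves implicit, and it is argued correctly.
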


%===============================================================================
\section{Implementations and Experimental Evaluation}
\label{sec:experiments}
%===============================================================================

We have integrated the model checking technique with a manually provided strategy into the HyperLTL hardware model checker $\mc$\footnote{Try the online tool interface with the latest version of $\mc$: \url{https://www.react.uni-saarland.de/tools/online/MCHyper/}}.
For the synthesis of strategies and reactive systems from hyperproperties, we have developed a separate bounded synthesis tool based on SMT-solving. 
In the following, we describe these implementations and report on experimental results. 
All experiments ran on a machine with dual-core Core i7, 3.3 GHz, and 16 GB memory.

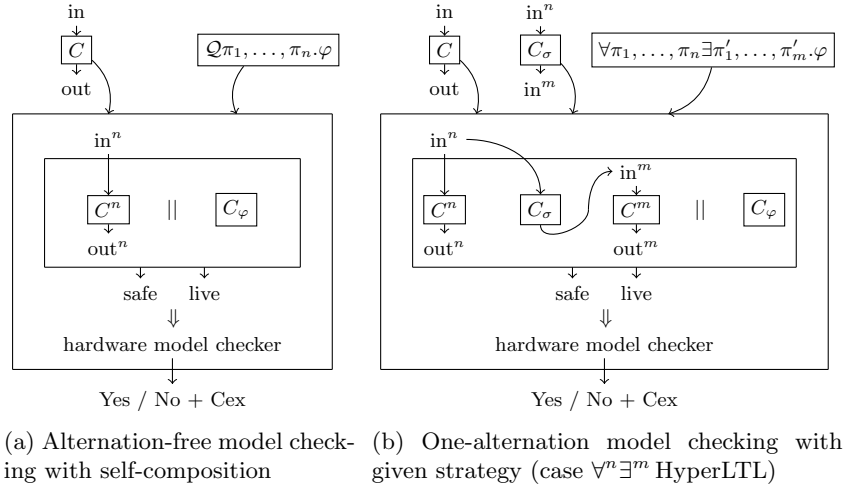
\begin{figure}[t]
  \centering
  \subfloat[Alternation-free model checking with self-composition]{
    \begin{tikzpicture}[transform shape,scale=0.85]
      \node[draw] at (-1.5,0) (C) {$C$};
      \node at (-1.5,0.6) (inC) {$\text{in}$};
      \node at (-1.5,-0.6) (outC) {$\text{out}$};
      \draw (inC) edge[->] (C);
      \draw (C) edge[->] (outC);
      \draw (C) edge[->,bend left] (-1,-1);
      
      \node[draw] at (1.5,0) (F) {$\mathcal{Q} \pi_1, \dots, \pi_n. \varphi$}; 
      \draw (F) edge[->,bend right] (1,-1);
         
      \draw (-2.5,-1) edge (2.5,-1);
      \node at (1.6,-1.2) (mc) {}; 
      \draw (-2.5,-1) edge (-2.5,-5);
      \draw (2.5,-1) edge (2.5,-5);
      \draw (-2.5,-5) edge (2.5,-5);
  
      \node[draw] at (-1,-2.5) (Cn) {$C^n$};
      \node at (-1,-1.4) (inCn) {$\text{in}^n$};
      \node at (-1,-3.1) (outCn) {$\text{out}^n$};
      \draw (inCn) edge[->] (Cn);
      \draw (Cn) edge[->] (outCn);
      
      \node at (0,-2.5) (P) {$||$};
      \node[draw] at (1,-2.5) (CF) {$C_\varphi$};
      
      \draw (-2,-1.7) edge (2,-1.7);
      \draw (-2,-1.7) edge (-2,-3.4);
      \draw (2,-1.7) edge (2,-3.4);
      \draw (-2,-3.4) edge (2,-3.4);
      
      \node at (-0.5,-3.8) (safe) {$\text{safe}$};
      \draw (-0.5,-3.4) edge[->] (safe);
      
      \node at (0.5,-3.8) (live) {$\text{live}$};
      \draw (0.5,-3.4) edge[->] (live);
          
      \node[rotate=-90] at (0,-4.2) (B) {$\Rightarrow$};
      \node at (0,-4.6) (abc) {hardware model checker};
      
      \node at (0,-5.5) (res) {$\text{Yes / No + Cex}$};
      \draw (abc) edge[->] (res);
      
    \end{tikzpicture}
    \label{fig:oldMCHyper}
  }
  \ \ 
  \subfloat[One-alternation model checking with given strategy (case $\forall^n \exists^m \, \hyperltl$)]{
    \begin{tikzpicture}[transform shape,scale=0.85]
      \node[draw] at (-1.5,0) (C) {$C$};
      \node at (-1.5,0.6) (inC) {$\text{in}$};
      \node at (-1.5,-0.6) (outC) {$\text{out}$};
      \draw (inC) edge[->] (C);
      \draw (C) edge[->] (outC);
      \draw (C) edge[->,bend left] (-1,-1);
      
      \node[draw] at (0,0) (S) {$C_\sigma$};
      \node at (0,0.6) (inS) {$\text{in}^n$};
      \node at (0,-0.6) (outS) {$\text{in}^m$};
      \draw (inS) edge[->] (S);
      \draw (S) edge[->] (outS);
      \draw (S) edge[->,bend left] (0.5,-1);
      
      \node[draw] at (2.7,0) (F) {$\forall \pi_1, \dots, \pi_n \exists \pi'_1, \dots, \pi'_m. \varphi$};       
      \draw (F) edge[->,bend left] (2,-1);
      
      \draw (-2.5,-1) edge (4.5,-1);
      \node at (3.6,-1.2) (mc) {}; 
      \draw (-2.5,-1) edge (-2.5,-5);
      \draw (4.5,-1) edge (4.5,-5);
      \draw (-2.5,-5) edge (4.5,-5);
      
      \node[draw] at (-1.5,-2.5) (Cn) {$C^n$};
      \node at (-1.5,-1.4) (inCn) {$\text{in}^n$};
      \node at (-1.5,-3.1) (outCn) {$\text{out}^n$};
      \draw (inCn) edge[->] (Cn);
      \draw (Cn) edge[->] (outCn);
      
      \node[draw] at (1.5,-2.5) (Cm) {$C^m$};
      \node at (1.5,-1.9) (inCm) {$\text{in}^m$};
      \node at (1.5,-3.1) (outCm) {$\text{out}^m$};
      \draw (inCm) edge[->] (Cm);
      \draw (Cm) edge[->] (outCm);
      
      \node[draw] at (0,-2.5) (Cs) {$C_\sigma$};
      \draw (inCn) edge[->,out=east,in=north] (Cs);
      \draw (Cs.south) edge[out=south,in=south] (0.75,-2.5);
      \draw (0.75,-2.5) edge[->,out=north,in=west] (inCm.west);
      
      \node at (2.5,-2.5) (P) {$||$};
      \node[draw] at (3.5,-2.5) (CF) {$C_\varphi$};
      
      \draw (-2,-1.7) edge (4,-1.7);
      \draw (-2,-1.7) edge (-2,-3.4);
      \draw (4,-1.7) edge (4,-3.4);
      \draw (-2,-3.4) edge (4,-3.4);
      
      \node at (0.5,-3.8) (safe) {$\text{safe}$};
      \draw (0.5,-3.4) edge[->] (safe);
      
      \node at (1.5,-3.8) (live) {$\text{live}$};
      \draw (1.5,-3.4) edge[->] (live);
      
      \node[rotate=-90] at (1,-4.2) (B) {$\Rightarrow$};
      \node at (1,-4.6) (abc) {hardware model checker};
      
      \node at (1,-5.5) (res) {$\text{Yes / No + Cex}$};
      \draw (abc) edge[->] (res);
      
    \end{tikzpicture}
    \label{fig:newMCHyper}
  }
  \caption{HyperLTL model checking with $\mc$}
  \vspace{-10pt}
\end{figure}

%===============================================================================
\subsubsection{Hardware Model Checking with Given Strategies.}
\label{sec:Impl:MC}
%===============================================================================
%
We have extended the model checker $\mc$~\cite{conf/cav/FinkbeinerRS15} from the alternation-free fragment to formulas with one quantifier alternation. 
The input to $\mc$ is a circuit description as an And-Inverter-Graph in the $\aiger$ format and a HyperLTL formula. 
Figure~\ref{fig:oldMCHyper} and Figure~\ref{fig:newMCHyper} show the model checking process in $\mc$ without and with quantifier alternation, respectively.
For formulas with quantifier alternation, the model checker now also accepts a strategy as an additional $\aiger$ circuit $C_\sigma$.
Based on this strategy, $\mc$ creates a new circuit where only the inputs of the universal system copies are exposed and the inputs of the existential system copies are determined by the strategy.
The new circuit is then model checked as described in~\cite{conf/cav/FinkbeinerRS15} with $\abc$~\cite{conf/cav/BraytonM10}.

We evaluate our extension of $\mc$ on formulas with quantifier alternation based on benchmarks from software doping~\cite{conf/esop/DArgenioBBFH17} and symmetry in mutual exclusion algorithms~\cite{conf/cav/FinkbeinerRS15}.  
Both considered problems have previously been analyzed with $\mc$; however, since the properties in both problems require quantifier alternation, we were previously limited to a (manually obtained) approximation of the properties as universal formulas. 
The correctness of manual approximations is not given but has to be shown separately. 
By directly model checking the formula with quantifier alternation we know that we are checking the correct formula without needing any additional proof of correctness. 

\noindent
\emph{Software Doping. }
D'Argenio et al.~\cite{conf/esop/DArgenioBBFH17} examined a clean and a doped version of an emission control program of a car and used the previous version of $\mc$ to formally verify approximations of these properties.  
Robust cleanness is expressed in the one-alternation fragment using two $\forall^2 \exists^1\, \hyperltl$ formulas (given in Prop. 19 in~\cite{conf/esop/DArgenioBBFH17}, cf. Sec.~\ref{sec:introduction}). 
In~\cite{conf/esop/DArgenioBBFH17}, the formulas were strengthened into alternation-free formulas that imply the original properties. 
Despite the quantifier alternation, Table~\ref{tbl:modelchecking} shows that the new version of $\mc$ verifies the precise formulas in roughly the same time as the alternation-free approximations~\cite{conf/esop/DArgenioBBFH17} while giving stronger correctness guarantees.
{\setlength{\tabcolsep}{6pt}
  \def\arraystretch{1.0}
\begin{table}[t]
  \caption{Experimental results for $\mc$ on the software doping and mutual exclusion benchmarks. All experiments used the IC3 option for $\abc$. Model and property names correspond to the ones used in~\cite{conf/esop/DArgenioBBFH17} and~\cite{conf/cav/FinkbeinerRS15}. 
  }
  \label{tbl:modelchecking}
  \centering
  \begin{tabular}{lccr}
    \hline\noalign{\smallskip}
    Model & \#Latches & Property & Time[s] \\
    \noalign{\smallskip}\hline\hline\noalign{\smallskip}
    EC 0.05    & 17 & $(10.a) + (10.b)$ & 1.8 \\
    EC 0.00625 & 23 & $(10.a) + (10.b)$ & 53.4 \\
    \noalign{\smallskip}\hline\noalign{\smallskip}
    AEC 0.05    & 19 & $(\neg 10.a) + (\neg10.b)$ & 2.8 \\
    AEC 0.00625 & 25 & $(\neg 10.a) + (\neg10.b)$ & 160.1 \\
    \noalign{\smallskip}\hline\hline\noalign{\smallskip}
    \multirow{2}{*}{Bakery.a.n.s}       & \multirow{2}{*}{47} & Sym5 & 50.6  \\
     							        & 					  & Sym6 & 27.5  \\
    \noalign{\smallskip}\hline\noalign{\smallskip}
    \multirow{2}{*}{Bakery.a.n.s.5proc} & \multirow{2}{*}{90} & Sym7 & 461.3 \\
     							        & 					  & Sym8 & 472.3 \\
    \noalign{\smallskip}\hline
  \end{tabular}
  \vspace{-10pt}
\end{table}}

\noindent
\emph{Symmetry in Mutual Exclusion Protocols. }
$\forall^*\exists^*\, \hyperltl$ allows us to specify symmetry for mutual exclusion protocols.
In such protocols, we wish to guarantee that every request is eventually answered, and the grants are mutually exclusive.
In our experiments, we used an implementation of the Bakery protocol~\cite{journals/cacm/Lamport74a}. 
Table~\ref{tbl:modelchecking} shows the verification results for the precise $\forall^1 \exists^1$ properties. 
Comparing these results to the performance on the approximations of the symmetry properties~\cite{conf/cav/FinkbeinerRS15}, we, again, observe that the verification times are similar. 
However, we gain the additional correctness guarantees as described above. 

%===============================================================================
\subsubsection{Strategy and System Synthesis.}
\label{sec:experimentsSynthesis}
%===============================================================================
%
For the synthesis of strategies for existential quantifiers and for the synthesis of reactive
systems from hyperproperties, we have developed a separate bounded synthesis tool based on SMT-solving with $\zthree$~\cite{conf/tacas/MouraB08}.
Our evaluation is based on two benchmark families, the \emph{dining cryptographers} problem~\cite{journals/cacm/Chaum85} and a simplified version of the symmetry problem in mutual exclusion protocols discussed previously.
The results are shown in Table~\ref{tbl:synthesis}.
Obviously, synthesis operates at a vastly smaller scale than model checking with given strategies.
In the dining cryptographers example, $\zthree$ was unable to find an implementation for the full synthesis problem, but could easily synthesize strategies for the existential trace quantifiers when provided with an implementation.
With the progress of constraint solver that employ quantification over Boolean functions~\cite{conf/sat/TentrupR19} we expect scalability improvements of our synthesis approach.

{\setlength{\tabcolsep}{2pt}
\def\arraystretch{1.0}
\begin{table}[t]
  \caption{Summary of the experimental results on the benchmarks sets described in Section~\ref{sec:experiments}.
    When no hyperproperty is given, only the LTL part is used.
  }
  \label{tbl:synthesis}
  \centering
  \begin{tabular}{llllr}
    \hline\noalign{\smallskip}
    Instance & Hyperproperty & $\card{\tsys}$ & $\card{\tsys_\exists}$ & Time[s]\\
    \noalign{\smallskip}\hline\hline\noalign{\smallskip}
    \multirow{2}{*}{Dining Cryptographers} 
       & distributed + deniability  & \multicolumn{2}{c}{TO}  \\
       & distributed + deniability with given $\tsys$ & (1)&1 & 1.2  \\
    \noalign{\smallskip}\hline\noalign{\smallskip}
    \multirow{2}{*}{Mutex} & ---  & 2 & -- & $<1$ \\
       & symmetry & 3 & 1 & 3.4 \\
    \noalign{\smallskip}\hline\noalign{\smallskip}
    \multirow{4}{*}{Mutex w/o spurious grants} & ---  & 3 & -- & $<1$ \\
       & symmetry & 3 & 1 & 3.9 \\
       & wait-free & 3 & 3 & 46 \\ 
       & symmetry + wait-free & 3 & 1+3 & 840 \\ 
    \noalign{\smallskip}\hline
  \end{tabular}
  \vspace{-10pt}
\end{table}}

%===============================================================================
\section{Conclusions}
\label{sec:conclusion}
%===============================================================================

We have presented model checking and synthesis techniques for hyperliveness properties
expressed as HyperLTL formulas with quantifier alternation. The alternation
makes it possible to specify hyperproperties such as generalized noninterference, symmetry, and deniability. 
Our approach is the first method for the
synthesis of reactive systems from HyperLTL formulas with quantifier alternation and the first
practical method for the verification of such specifications.

The approach is based on a game-theoretic view of existential
quantifiers, where the $\exists$-player reacts to decisions of the
$\forall$-player.  The key advantage is that 
the complementation of the system automaton is avoided
(cf. \cite{conf/cav/FinkbeinerRS15}). Instead, a strategy must be
found for the $\exists$-player.  Since this can be done either
manually or through automatic synthesis, the user of the model
checking or synthesis tool has the opportunity to trade some
automation for a significant gain in performance.

\medskip

\noindent{\bf Acknowledgements.}\;
We would like to thank Sebastian Biewer for providing the software doping models and formulas, 
Marvin Stenger for his advice on our synthesis experiments, 
and Jana Hofmann for her helpful comments on a draft of this paper. 

\bibliographystyle{splncs04}
\bibliography{main}

\end{document}